\documentclass[12pt]{article}%
\addtolength{\oddsidemargin}{-.75in}%
\addtolength{\evensidemargin}{-.75in}%
\addtolength{\textwidth}{1.5in}%
\addtolength{\textheight}{1.4in}%
%\addtolength{\topmargin}{-.8in}%
\addtolength{\topmargin}{-0.8in}

\usepackage{latexsym,amsmath,amssymb,amsthm,amsfonts,graphicx,natbib}
\usepackage{epsfig}
\usepackage[mathlines]{lineno}
\usepackage{amsmath}
\usepackage{amsfonts}
\usepackage{amssymb}
\usepackage{graphicx}%
\usepackage{setspace}
\doublespacing
\setcounter{MaxMatrixCols}{30}
\providecommand{\U}[1]{\protect\rule{.1in}{.1in}}
\include{def}
\def\dispmuskip{\thinmuskip= 3mu plus 0mu minus 2mu \medmuskip=  4mu plus 2mu minus 2mu \thickmuskip=5mu plus 5mu minus 2mu}
\def\textmuskip{\thinmuskip= 0mu                    \medmuskip=  1mu plus 1mu minus 1mu \thickmuskip=2mu plus 3mu minus 1mu}
\def\beq{\dispmuskip\begin{equation}}
\def\eeq{\end{equation}\textmuskip}
\def\beqn{\dispmuskip\begin{displaymath}}
\def\eeqn{\end{displaymath}\textmuskip}
\def\bea{\dispmuskip\begin{eqnarray}}
\def\eea{\end{eqnarray}\textmuskip}
\def\bqan{\dispmuskip\begin{eqnarray*}}
\def\eqan{\end{eqnarray*}\textmuskip}
%\addtolength{\textwidth}{1.2in}
%\addtolength{\oddsidemargin}{-0.5in}
%\addtolength{\textheight}{1.6in}
%\addtolength{\topmargin}{-0.8in}
\def\paradot#1{\vspace{1.3ex plus 0.7ex minus 0.5ex}\noindent{\bf\boldmath{#1.}}}
\newtheorem{theorem}{Theorem}

\newtheorem{corollary}{Corollary}

\newtheorem{definition}{Definition}

\newtheorem{lemma}{Lemma}

\newcommand{\lpds}{log predictive density score}

\newcommand{\KL}{Kullback-Liebler}
\newcommand{\MH}{Metropolis-Hastings}

\newcommand{\wh}{\widehat}
\newcommand{\wt}{\widetilde}
\def\v{\boldsymbol}

\def\E{{\mathbb{E}}}
\def\P{{\mathbb{P}}}

\def\I{1\!\!1}
\def\v{\boldsymbol}

\def\t{\theta}

\def\H{{\cal H}}

\def\LPDS{\text{\rm LPDS}}
\def\KL{\text{\rm KL}}

\def\tr{\text{\rm trace}}

\def\argmax{\text{\rm argmax}}

\def\diag{\text{\rm diag}}
\newcommand{\pihat}{\widehat \pi}
\newcommand{\vb}{\text{vb}}

\newcommand{\mutilde} {\widetilde \mu}
\newcommand{\nutilde} {\widetilde \nu}
\newcommand{\Sigmatilde} {\widetilde \Sigma}
\newcommand{\khat}{\widehat k}

\newcommand{\calHn}{{\cal H}^{n-1}}
\newcommand{\psihat}{\widehat \psi}
\newcommand{\IG}{\text{\rm IG}}
\newcommand{\hmax}{h_{\rm max}}

\begin{document}
\def\spacingset#1{\renewcommand{\baselinestretch}%
{#1}\small\normalsize} \spacingset{1}

%\linenumbers
\title{Adaptive Metropolis-Hastings Sampling using Reversible Dependent Mixture
Proposals}
\author{Minh-Ngoc Tran\\\textit{\ }{\small Australian School of Business}\\{\small University of New South Wales, Australia}\\{\small \texttt{minh-ngoc.tran@unsw.edu.au}}\\
\and Michael K. Pitt\\{\small Department of Economics}\\{\small University of Warwick, United Kingdom}\\{\small \texttt{M.Pitt@warwick.ac.uk}}\\
\and Robert Kohn\\{\small Australian School of Business}\\{\small University of New South Wales, Australia}\\{\small \texttt{r.kohn@unsw.edu.au}} }
\date{}
\maketitle

\begin{abstract}
This article develops a general-purpose adaptive sampler that approximates the
target density by a mixture of multivariate $t$ densities. The adaptive
sampler is based on reversible proposal distributions each of which has the
mixture of multivariate $t$ densities as its invariant density. The reversible
proposals consist of a combination of independent and correlated steps that
allow the sampler to traverse the parameter space efficiently as well as
allowing the sampler to keep moving and locally exploring the parameter space.
We employ a two-chain approach, in which a trial chain is used to adapt the
proposal densities used in the main chain. Convergence of the main chain
and a strong law of large numbers are proved under reasonable conditions, and
without imposing a Diminishing Adaptation condition. The mixtures of
multivariate $t$ densities are fitted by an efficient Variational
Approximation algorithm in which the number of components is determined
automatically. The performance of the sampler is evaluated using simulated and
real examples. Our autocorrelated framework is quite general and can handle mixtures
other than  multivariate $t$.

\vspace{1.3ex plus 0.7ex minus 0.5ex}\noindent\textbf{\boldmath{Keywords.}}
 Ergodic convergence; Markov Chain Monte Carlo; Metropolis-within Gibbs composite sampling;
Multivariate
$t$ mixtures; Simulated annealing; Variational Approximation.

\end{abstract}

%\pagewiselinenumbers

%======================================================================%
\newpage

%======================================================================%

\section{Introduction}

\label{S: introd}
\onehalfspacing
Suppose that we wish to sample from a target distribution using a
Metropolis-Hastings sampling method. For a traditional Metropolis-Hastings{}
algorithm, where the proposal distribution is fixed in advance, it is well
known that the success of the sampling method depends heavily on how the
proposal distribution is selected. It is challenging to develop non-adaptive proposals
in several types of problems. One example is when the target density is highly
non-standard and/or multimodal. A second example is when the parameters are structural and
deeply embedded in the likelihood so that it is difficult to differentiate the likelihood
with respect to the likelihood; see for example \cite{schmidl:2013} who consider
dynamic model with a regression function  that is obtained as a solution to a differential equation.
In such cases adaptive sampling, which sequentially updates the
proposal distribution based on the previous iterates, has been shown useful.
See, e.g., \cite{Haario:1999,Haario:2001,Roberts:2007,Roberts:2009,Holden:2009} and
\cite{Giordani:2010}.

The chain generated from an adaptive sampler is no longer Markovian, and the
convergence results obtained for traditional Markov chain Monte Carlo (MCMC)
sampling no longer apply. \cite{Andrieu:2008} warn that care must be taken in
designing an adaptive sampler, as otherwise it may not converge to the correct
distribution. They demonstrate this by constructing an adaptive chain that
does not converge to the target. However, some theoretical results on the
convergence of adaptive samplers are now available. A number of papers prove
convergence by assuming that the adaptation
eventually becomes negligible, which they call the Diminishing Adaptation
condition. See, for example,
\cite{Haario:1999,Haario:2001,Andrieu:2006,Roberts:2007} and
\cite{Giordani:2010}. This condition is relatively easy to check if the
adaptation is based on moment estimates as, for example, in an adaptive random
walk, as in \cite{Haario:1999,Haario:2001} and \cite{Roberts:2007} where the
rate at which the adaptation diminishes is governed naturally by the sample
size. It is more difficult to determine an optimal rate of adaptation when the
adaptation is based on non-quadratic optimization as in \cite{Giordani:2010}
and in the Variational Approximation approach in our article.

Our article constructs a general-purpose adaptive sampler that we call the
Adaptive Correlated Metropolis-Hastings (ACMH) sampler. The sampler is
described in Section~\ref{Sec:adaptive}. The first contribution of the article
is to propose a two-chain approach to construct the proposal densities, with
the iterates of the first chain used to construct the proposal densities used
in the second (main) chain. The ACMH sampler approximates the target density
by a sequence of mixtures of multivariate $t$ densities. The heavy tails of
$t$ and mixture of $t$ distributions is a desirable property that a proposal
distribution should have. Each mixture of $t$ distribution is fitted by a
Variational Approximation algorithm which automatically selects the number of
components. Variational Approximation is now well known as a computationally
efficient method for estimating complex density functions; see, e.g.,
\cite{McGrory:2007} and \cite{Giordani:2012}. An attractive property of
Variational Approximation that makes it suitable for constructing proposal
distributions is that it can locate the modes quickly and efficiently.

The second, and main,  contribution of the article is to introduce in
Section~\ref{Sec:constructing} a method to construct reversible proposal
densities, each of which has the mixture of $t$ approximation as its invariant
density. The proposal densities consist of both autocorrelated and independent
Metropolis-Hastings steps. Independent steps allow the sampler to traverse the
parameter space efficiently, while correlated steps allow the sampler to keep
moving, (i.e., avoid getting stuck), while also exploring the parameter
space locally. If the approximating $t$ mixture is close to the target, then the
reversible proposals introduced in our article will allow the sampler to move
easily and overcome the low acceptance rates often encountered by purely
independent proposals. Note that the reversible correlated proposals we
introduce are quite general and it is only necessary to be able to generate from
them but it is unnecessary to be able to evaluate them. This is an important
property for the correlated mixtures of $t$ that we use.

The third contribution of the paper is to show in Section~\ref{Sec:theory}
that the ACMH sampler converges uniformly to the target density and to obtain
a strong law of large numbers under reasonable conditions,
without requiring that Diminishing Adaptation holds. As pointed out above, it
is difficult to impose Diminishing Adaptation in a natural way for general
proposals such as those in our paper.

Adaptive sampling algorithms can be categorized into two groups: exploitative
and exploratory algorithms \citep{Schmidler:2011}. Exploitative algorithms
attempt to improve on features of the target distribution that have already
been seen by the sampler, i.e. based on the past iterations to improve on what
have been discovered by the past iterations. The adaptive samplers of
\cite{Haario:2001} and \cite{Giordani:2010} belong to this group. The second
group encourages exploring the whole support of the target, including
tempering \citep{Geyer:1995}, simulated annealing \citep{Neal:2001} and the
Wang-Landau algorithm \citep{Wang:2001a,Wang:2001b}. It is therefore useful to
develop a general-purpose adaptive sampler that can be both exploratory and
exploitative. An important feature of the ACMH sampler is the use, in
Section~\ref{Sec:initialization},  of an exploratory stage to initialize the
adaptive chain. In particular, we describe in this paper how to use
simulated annealing \citep{Neal:2001}
to initialize the chain. \cite{Giordani:2010} suggest initializing the chain
using either random walk steps or by using a Laplace approximation, neither of
which work well for targets that are multimodal and/or have a non-standard
support. Initializing by an exploratory algorithm helps the sampler initially
explore efficiently the features of the target, and these features will be
improved in the subsequent exploitative stage.
Section \ref{Sec:simulations} shows that such a combination makes the ACMH
sampler work well for challenging targets where many other samplers may fail.

A second feature of the ACMH sampler is that it uses a small proportion of
adaptive random walk steps in order to explore tail regions around local modes
more effectively. A third feature of the ACMH sampler is that it
uses Metropolis-within-Gibbs component-wise sampling to make the sampler move more efficiently in high dimensions,
where it is often difficult to efficiently move the whole state vector as a single component
 because of the large differences in the values of the target and the proposal  at
the current and proposed states. See \cite{Johnson:2011} and its references
for some convergence results on such composite MCMC sampling.

Section \ref{Sec:simulations} presents simulation studies and Section
\ref{Sec:applications} applies the adaptive sampling scheme to estimate the
covariance matrix for a financial data set and analyze a spam email data set.

There are two  important and immediate extensions of our work, which are discussed in
Section~\ref{Sec:conclusion}. The first is to more general reversible mixture proposals. The second
is to problems where the likelihood cannot be evaluated explicitly, but can be estimated
unbiasedly.

\cite{Giordani:2010} construct a general-purpose adaptive independent
Metropolis-Hastings{} sampler that uses a mixture of normals as the proposal
distribution. Their adaptive sampler works well in many cases because it is
flexible and so helps the proposal approximate the target distribution better.
They use the $k$-means algorithm to estimate the mixtures of normals. Although
this method is fast, using independent Metropolis-Hastings{} steps only may
result in a low acceptance rate that may not explore the local features of the
state space as effectively. In addition, the automatic selection of components
used in our article works appreciably better than using BIC, as done in
\cite{Giordani:2010}.

\cite{Freitas:2001} use Variational Approximation to first estimate the target
density and then use this approximation to form a fixed proposal density
within an MCMC scheme. There are two problems with this approach, which are
discussed in Section~\ref{Sec:estimating mixt}.

\cite{Holden:2009} provide a framework for constructing adaptive samplers that
ensures ergodicity without assuming Diminishing Adaptation. Not imposing
Diminishing Adaptation is attractive because it means that the adaptation can
continue indefinitely if new features of the target are learned. However, we
believe that the \cite{Holden:2009} framework is unnecessarily limited for two
reasons. First, it does not use the information about the target obtained from
dependent steps. Second, it augments the history on which the adaptation is
based by using proposals that are rejected by the Metropolis-Hastings method;
such inclusions typically lead to suboptimal adaptive proposals in our experience.

\cite{Hoogerheide2012} also use a multivariate mixture of $t$ proposal
densities which they fit using the EM algorithm. However, they stop adapting
after a preliminary stage and do not have a principled way of choosing the
number of components. In addition, their approach is harder to use when the
likelihood cannot be computed, but can be estimated unbiasedly.
\cite{schmidl:2013} propose an adaptive approach based on a vine copula but they stop
adapting after a fixed number of adaptive steps. We note that
it is straightforward to extend the  multivariate
mixture of $t$ approach in \cite{Hoogerheide2012} and the copula approach in
\cite{schmidl:2013} to reversible proposals
and to a two chain adaptive solution as in our article.

\cite{craiu:2009} also emphasize the importance of initial exploration and
combine  exploratory and exploitative stages using
 parallel an inter-chain adaptation algorithm to initially explore the sample space.
They run many chains in parallel and let them interact in order to explore the modes,
while we use annealed sampling with an SMC sampler.
Their main contribution is the regional adaption algorithm, but they
only discuss the case with two regions/two modes.
It does  not seem straightforward to extend the algorithm to general multimodal cases
and it seems difficult to determine the number of regions/modes.

%======================================================================%

\section{The adaptive sampling framework}

\label{Sec:theory}
%======================================================================%

\subsection{Adaptive sampling algorithm}

Let $\Pi(z)$ be the target distribution with corresponding density $\pi(z)$.
We consider using a two-chain approach to adapt the proposal densities. The
idea is to run simultaneously two chains, a trial chain $X^{\prime}$ and a
main chain $X$, where the proposal densities used by the main chain $X$ are
estimated from the iterates of the trial chain $X^{\prime}$, and are not based
on the past iterates of chain $X$. We refer to the past iterates of
$X^{\prime}$ used to estimate the proposal as the history vector, and denote
by $\mathcal{H}^{n-1}$ the history vector obtained after iteration $n-1$,
which is used to compute the proposal density $q_{\mathcal{H}^{n-1}}%
(\cdot|\cdot)$ at iteration $n$. We consider the following general adaptive
sampling algorithm.

\vspace{1.3ex plus 0.7ex minus 0.5ex}\noindent\textbf{\boldmath{Two-chain
sampling algorithm.}}

\begin{enumerate}
\item Initialize the history $\mathcal{H}^{0}$, the proposal $q_{\mathcal{H}%
^{0}}(\cdot|\cdot)$ and initialize $x_{0}^{\prime}$, $x_{0}$ of the trial chain
$X^{\prime}$ and main chain $X$, respectively.

\item For $n = 1,2,...$

\begin{itemize}
\item[(a)] Update the trial chain:

\begin{itemize}

\item Generate a proposal $z^{\prime}\sim q_{\mathcal{H}^{n-1}}(z^{\prime
}|x_{n-1}^{\prime})$.

\item Compute  \thinmuskip= 3mu plus 0mu minus 2mu \medmuskip= 4mu plus 2mu
minus 2mu \thickmuskip=5mu plus 5mu minus 2mu%
\[
\alpha_{n}^{\prime}(z^{\prime},x_{n-1}^{\prime},\mathcal{H}^{n-1})=\min\left(
1,\frac{\pi(z^{\prime})q_{\mathcal{H}^{n-1}}(x_{n-1}^{\prime}|z^{\prime})}%
{\pi(x_{n-1}^{\prime})q_{\mathcal{H}^{n}}(z^{\prime}|x_{n-1}^{\prime})}\right)
.
\]
\thinmuskip= 0mu \medmuskip= 1mu plus 1mu minus 1mu \thickmuskip=2mu plus 3mu
minus 1mu

\item Accept $z^{\prime}$ with probability $\alpha_{n}^{\prime}(z^{\prime
},x_{n-1}^{\prime},\mathcal{H}^{n-1})$ and set $x_{n}^{\prime}=z^{\prime}$,
otherwise set $x_{n}^{\prime}=x_{n-1}^{\prime}$.

\item Set $\mathcal{H}^{n}=(\mathcal{H}^{n-1},x_{n-1}^{\prime})$ if
$z^{\prime}$ is accepted, otherwise set $\mathcal{H}^{n}=\mathcal{H}^{n-1}$.
\end{itemize}

\item[(b)] Update the main chain:

\begin{itemize}

\item Generate a proposal $z\sim q_{\mathcal{H}^{n-1}}(z|x_{n-1})$.

\item Compute  \thinmuskip= 3mu plus 0mu minus 2mu \medmuskip= 4mu plus 2mu
minus 2mu \thickmuskip=5mu plus 5mu minus 2mu%
\[
\alpha_{n}(z,x_{n-1})=\min\left( 1,\frac{\pi(z)q_{\mathcal{H}^{n-1}}%
(x_{n-1}|z)}{\pi(x_{n-1})q_{\mathcal{H}^{n-1}}(z|x_{n-1})}\right) .
\]
\thinmuskip= 0mu \medmuskip= 1mu plus 1mu minus 1mu \thickmuskip=2mu plus 3mu
minus 1mu

\item Set $x_{n}=z$ with probability $\alpha_{n}(z,x_{n-1})$, otherwise set
$x_{n}=x_{n-1}$.
\end{itemize}
\end{itemize}
\end{enumerate}

The ACMH sampler is based on this two-chain sampling framework and its
convergence is justified by Corollary \ref{corollary:ACMH_convergence}.

%-------------------------------------------%

\subsection{Convergence results} \label{SS: convergence results}

%-------------------------------------------%
This section presents some general convergence results for adaptive MCMC.
Suppose that $E$ is a sample space with $x,z\in E$. $\mathcal{E}$ is a
$\sigma$-field on $E$. Suppose that $q_{i}(z|x)$ is the proposal density used
at the $i$th iteration of the Metropolis-Hastings{} algorithm. In the
two-chain algorithm above, $q_{i}$ is the density $q_{\mathcal{H}^{i-1}}$
which is estimated based on the history $\mathcal{H}^{i-1}$.
Let $\{x_{n},n\geq0\}$ be the Markov chain generated by the
Metropolis-Hastings{} algorithm and ${\mathbb{P}}^{n}(x_{0},\cdot)$ the
distribution of the state $x_{n}$ with the initial state $x_{0}$. Denote by
$p_{i}(x_{i-1},dx_{i})$ the Markov transition distribution at the $i$th
iteration. We have the following convergence results whose proofs are in the Appendix.

\begin{theorem}[Ergodicity]\label{ergodic_theorem}
%-------------------------%
Suppose that
\beq\label{assumption}
p_i(x_{i-1},dx_i)\geq\beta\Pi(dx_i),\;\;\text{for all}\;\; i\geq1,
\eeq
with $0<\beta<1$.
Then
\beq\label{e:convergence}
\|\P^n(x_0,\cdot)-\Pi(\cdot)\|_\text{TV}\leq 2(1-\beta)^n\to0 \;\text{as}\;n\to\infty,
\eeq
for any initial $x_0$,
where $\|\cdot\|_\text{TV}$ denotes the total variation distance.
\end{theorem}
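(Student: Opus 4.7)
The condition \eqref{assumption} is a Doeblin-type minorization saying that at every step the kernel $p_i$ deposits at least a fraction $\beta$ of its mass onto the target $\Pi$, independently of the current state. The plan is to convert this into a uniform one-step contraction in total variation and then iterate. The first move is to write
\[
p_i(x,dy) \;=\; \beta\,\Pi(dy) \;+\; (1-\beta)\,R_i(x,dy),
\]
where $R_i(x,dy) := (p_i(x,dy) - \beta\,\Pi(dy))/(1-\beta)$. By \eqref{assumption} this is nonnegative, and since $p_i(x,\cdot)$ and $\Pi$ are both probability measures, $R_i(x,\cdot)$ integrates to one, so $R_i$ is itself a Markov kernel.

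Given any two probability measures $\mu,\nu$ on $(E,\mathcal{E})$, the $\beta\,\Pi(dy)$ term contributes identically to $\mu p_i$ and $\nu p_i$ and therefore cancels in the signed measure $(\mu-\nu)p_i$, since $\mu(E)=\nu(E)=1$. What remains is $(1-\beta)\int R_i(x,\cdot)(\mu-\nu)(dx)$, whose total variation is bounded by $(1-\beta)\|\mu-\nu\|_\text{TV}$ because Markov kernels are non-expansive in total variation (by duality with test functions $|f|\le 1$ and $|R_i f|\le 1$). This gives the one-step contraction $\|\mu p_i - \nu p_i\|_\text{TV} \leq (1-\beta)\|\mu-\nu\|_\text{TV}$, which I would apply with $\mu = \P^{n-1}(x_0,\cdot)$ and $\nu = \Pi$, using that $\Pi$ is invariant for every Metropolis--Hastings kernel $p_i$, so that $\Pi p_i = \Pi$.

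Iterating $n$ times then yields
\[
\|\P^n(x_0,\cdot) - \Pi(\cdot)\|_\text{TV} \;\leq\; (1-\beta)^n\,\|\delta_{x_0} - \Pi\|_\text{TV} \;\leq\; 2(1-\beta)^n,
\]
the final inequality being the bound $\|\delta_{x_0}-\Pi\|_\text{TV}\le 2$ that holds between any two probability measures under the signed-measure convention of total variation. This is precisely \eqref{e:convergence}. The only delicate point is that in the adaptive two-chain framework $p_i$ depends on the random history $\mathcal{H}^{i-1}$, so the decomposition and contraction must be read path-wise; however, because the constant $(1-\beta)$ is deterministic and uniform in the adaptation, the bound carries through to the unconditional statement by taking expectations. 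I therefore do not expect a serious obstacle in the proof itself: the entire content of the theorem rests on verifying the minorization hypothesis \eqref{assumption}, and the substantive work for the ACMH sampler will presumably appear when deducing Corollary~\ref{corollary:ACMH_convergence} from this theorem.
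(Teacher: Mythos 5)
Your proof is correct and rests on the same Doeblin-type decomposition $p_i(x,dy)=\beta\,\Pi(dy)+(1-\beta)R_i(x,dy)$ that the paper uses (its split-chain construction following Athreya et al.); the only difference is bookkeeping — the paper iterates the decomposition by induction to obtain the exact identity $\P^n(x_0,\cdot)=(1-\beta)^n\nu^n(x_0,\cdot)+\bigl(1-(1-\beta)^n\bigr)\Pi(\cdot)$ and then bounds the difference, whereas you iterate a one-step total-variation contraction, which is an equivalent route. Both arguments also rely on $\Pi$ being invariant for each Metropolis--Hastings kernel $p_i$ (you state this explicitly; the paper uses it implicitly in the induction), and your path-wise reading of the adaptive dependence on $\calHn$ matches the paper's treatment.
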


%-------------------------%
\begin{theorem}[Strong law of large numbers]\label{SLLN_theorem}
%-------------------------%
Suppose that $h(x)$ is a bounded function on $E$ and that \eqref{assumption} holds.
Let $S_n = \sum_{i=1}^nh(x_i)$. Then,
\beq\label{a.s.convergence}
\frac{S_n}{n}\to \E_\Pi(h)\;\;\text{almost surely}.
\eeq
\end{theorem}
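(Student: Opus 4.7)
My plan is to reduce the a.s.\ convergence to a second-moment estimate via Chebyshev and Borel--Cantelli along a polynomial subsequence, then use boundedness of $h$ to bridge the gaps. The geometric ergodicity of Theorem~\ref{ergodic_theorem} supplies the key exponential forgetting that makes this possible.

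First I would extend Theorem~\ref{ergodic_theorem} to conditional distributions: because the minorization \eqref{assumption} holds uniformly in $x_{i-1}$, the coupling argument behind that theorem yields
\beqn
\bigl\|\P(x_j\in\cdot\mid x_i=x) - \Pi\bigr\|_{\rm TV} \leq 2(1-\beta)^{j-i},
\eeqn
uniformly in $x$ and in $i<j$. Boundedness of $h$ then gives $|\E[h(x_n)]-\E_\Pi(h)| \leq 2\|h\|_\infty(1-\beta)^n$ and $|\E[h(x_j)\mid x_i] - \E_\Pi(h)| \leq 2\|h\|_\infty(1-\beta)^{j-i}$. Averaging the first inequality yields $|\E[S_n/n] - \E_\Pi(h)| = O(1/n)$. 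For the covariance, the identity $\cov(h(x_i),h(x_j)) = \E\{h(x_i)(\E[h(x_j)\mid x_i] - \E[h(x_j)])\}$ together with the preceding bounds gives $|\cov(h(x_i),h(x_j))| \leq 4\|h\|_\infty^2(1-\beta)^{j-i}$. Adding the $n$ variance terms (each at most $\|h\|_\infty^2$) to twice the sum of covariances over $1\leq i<j\leq n$ and applying a geometric series yields $\VAR(S_n) = O(n)$, hence $\VAR(S_n/n) = O(1/n)$.

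Combining the mean and variance estimates with Chebyshev's inequality gives $\P(|S_n/n - \E_\Pi(h)| > \eps) = O(1/(n\eps^2))$. Along $n_k = k^2$ these probabilities are summable, so Borel--Cantelli delivers $S_{k^2}/k^2 \to \E_\Pi(h)$ a.s. For $k^2 \leq n < (k+1)^2$, boundedness of $h$ gives $|S_n/n - S_{k^2}/k^2| \leq 2\|h\|_\infty\,((k+1)^2 - k^2)/k^2 = O(1/k)$, so $S_n/n$ converges to the same limit along the full sequence. The only delicate point is the conditional extension of Theorem~\ref{ergodic_theorem} required for the covariance bound, but this is automatic because the minorization in \eqref{assumption} is assumed uniformly in the current state, so the same coupling works with $x_i=x$ as the initial condition for the shifted chain $p_{i+1},p_{i+2},\ldots$.
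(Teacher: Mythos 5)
Your proof is correct and follows essentially the same route as the paper: the paper's Lemma~\ref{lemma_varSn} uses the identity $\P^{j|i}(x_i,\cdot)-\Pi=(1-\beta)^{j-i}(\nu^{j|i}-\Pi)$ from the split-chain construction to show $\VAR(S_n/n)=O(1/n)$, exactly as your conditional total-variation and covariance bounds do, and the Chebyshev--Borel--Cantelli argument along $n_k=k^2$ with the boundedness interpolation is precisely the Grimmett--Stirzaker proof the paper invokes for the remaining step. Your explicit justification of the conditional extension of Theorem~\ref{ergodic_theorem} (uniformity of the minorization in the current state) is the same point underlying the paper's treatment of $\E_{\mu_0\P^i\P^{j|i}}(h\circ h)$, just spelled out more fully.
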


\begin{corollary}\label{corollary}
Suppose that $h$ is bounded.
Theorems \ref{ergodic_theorem} and \ref{SLLN_theorem} hold for each of the following cases.

\begin{enumerate}
\item[(i)] $q_i(z|x)\geq\beta\pi(z)$ with $0<\beta<1$ for all $i\geq 1$ and $x,z\in E$.
\item[(ii)] The proposal density $q_{i}(z|x)$ is a mixture of the form
\beqn
q_{i}(z|x)= \omega q_{1,i} (z|x) + (1-\omega)q_{2,i}(z|x),\; 0<\omega<1,
\eeqn
and $q_{1,i}(z|x) \geq\beta\pi(z) $ with $0<\beta< 1$ for all $z\in E$.
\item[(iii)] Let $p_{1,i}(x_{i-1},dx_{i})$ and $p_{2,i}(x_{i-1},dx_{i})$ be
transition distributions, each has stationary distribution $\pi$.
Suppose that $p_{1,i}(x_{i-1},dx_{i})$ is based on the proposal
density $q_{1,i} (z|x)$, where $q_{1,i}(z|x) \geq\beta\pi(z)$ for all $x,z\in E$ and $0< \beta< 1$.
The transition $p_{i}(x_{i-1},dx_{i})$ at the $i$th iterate
is a mixture of the form
\beqn
p_{i}(x_{i-1},dx_{i})=\omega p_{1,i}(x_{i-1},dx_{i})+(1-\omega)p_{2,i}(x_{i-1},dx_{i}),\;\;0<\omega<1.
\eeqn
\item[(iv)] Let $p_{1,i}$ and $p_{2,i}$ be the transition distributions as in (iii).
The transition at the $i$th iterate is a composition of the form
\beqn
p_{i}(x_{i-1},dx_{i})=p_{1,i}p_{2,i}(x_{i-1},dx_{i})=\int_{z}%
p_{1,i}(x_{i-1},dz)p_{2,i}(z,dx_{i}),
\eeqn
or
\beqn
p_{i}(x_{i-1},dx_{i})=p_{2,i}p_{1,i}(x_{i-1},dx_{i})=\int_{z}%
p_{2,i}(x_{i-1},dz)p_{1,i}(z,dx_{i}),
\eeqn
\item[(v)] Let $p_{1,i}$ and $p_{2,i}$ be the transition distributions as in (iii).
The transition at the $i$th iterate is a composition of $m_{1}$ repetitions of $p_{1,i}$ and $m_{2}$
repetitions of $p_{2,i}$, i.e. $p_{i}=p_{1,i}^{m_{1}}p_{2,i}^{m_{2}}$ or
$p_i=p_{2,i}^{m_{2}}p_{1,i}^{m_{1}}$.
\end{enumerate}
\end{corollary}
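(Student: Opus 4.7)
My plan is to reduce each of the five cases to the minorization condition \eqref{assumption} of Theorems \ref{ergodic_theorem} and \ref{SLLN_theorem}, possibly with a smaller constant $\beta'>0$ in place of $\beta$. Once the kernel $p_i$ is shown to majorize $\beta'\Pi$, both theorems apply verbatim.

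The fundamental step is case~(i), which I would handle first. Writing the Metropolis--Hastings transition as $p_i(x,dz) = q_i(z|x)\alpha_i(z,x)\,dz + r_i(x)\delta_x(dz)$, where $r_i(x)$ is the rejection probability, the key observation is that the acceptance contribution can be rewritten as $q_i(z|x)\alpha_i(z,x) = \min\bigl(q_i(z|x),\,\pi(z)q_i(x|z)/\pi(x)\bigr)$. Under the hypothesis $q_i(z|x)\geq\beta\pi(z)$ for all $x,z$, both arguments of the minimum exceed $\beta\pi(z)$, hence $p_i(x,dz)\geq\beta\pi(z)\,dz = \beta\Pi(dz)$ as required.

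Cases~(ii) and~(iii) then follow quickly. In~(ii), the mixture proposal satisfies $q_i(z|x)\geq\omega\,q_{1,i}(z|x)\geq\omega\beta\pi(z)$, reducing the argument to case~(i) with constant $\omega\beta$. In~(iii), case~(i) already gives $p_{1,i}(x,dz)\geq\beta\Pi(dz)$, so the mixture transition obeys $p_i(x,dz)\geq\omega\beta\Pi(dz)$. For case~(iv) with $p_i = p_{1,i}p_{2,i}$, combining the minorization of $p_{1,i}$ from case~(i) with the stationarity of $\Pi$ under $p_{2,i}$ yields $p_i(x,A) = \int p_{1,i}(x,dz)p_{2,i}(z,A) \geq \beta\int\Pi(dz)p_{2,i}(z,A) = \beta\Pi(A)$; the reverse composition $p_{2,i}p_{1,i}$ is even easier, since $p_i(x,A) \geq \int p_{2,i}(x,dz)\beta\Pi(A) = \beta\Pi(A)$ without invoking stationarity. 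For case~(v), I would first show by a short induction on $m$ that $p_{1,i}^{m}(x,dz)\geq\beta\Pi(dz)$, using invariance of $\Pi$ under $p_{1,i}$ at each inductive step, and then apply the case~(iv) argument to the pair $(p_{1,i}^{m_1}, p_{2,i}^{m_2})$.

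The one genuinely substantive point is case~(i): the realization that a pointwise minorization of the proposal density automatically yields a minorization of the Metropolis--Hastings kernel, via the $\min$-form of the acceptance contribution together with the symmetric bound $\pi(z)q_i(x|z)/\pi(x) \geq \beta\pi(z)$. Cases~(ii)--(v) are then bookkeeping that combines this minorization with the invariance of $\Pi$ under the relevant component kernels, and I expect no further technical obstacles.
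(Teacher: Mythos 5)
Your proposal is correct and follows essentially the same route as the paper: establish the minorization $\alpha_i(z,x)q_i(z|x)\geq\beta\pi(z)$ for the Metropolis--Hastings kernel in case (i) (your $\min$-form argument is exactly the step the paper leaves as ``we can show that''), then propagate it to cases (ii)--(v) via the mixture weights and the invariance of $\Pi$ under the component kernels, just as in the Appendix. No gaps.
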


%======================================================================%
\section{Reversible proposals}\label{Sec:constructing}
%======================================================================%
In the \MH{} algorithm, it is desirable to have a proposal that depends
on the current state, is reversible, and marginally has an invariant distribution of choice.
We refer to such a proposal as a reversible proposal.
The dependence between the current and proposed states helps in moving locally and helps
the chain mix more rapidly and converge.
As will be seen later, reversibility simplifies the acceptance probability
in the MH{} algorithm and makes it close to one if the marginal distribution is a good approximation to the target.
Reversibility also means that it is only necessary to be able to generate from the proposal
distribution, and it is unnecessary to be able to evaluate it. This is important in our case
because the proposal densities are mixtures of conditional $t$ densities with dependence parameters
that are integrated out.
Section \ref{subsec:theory for reversible proposals} provides the theory for reversible proposals that we use
in our article.
Section \ref{subsec:correlated} introduces a reversible multivariate $t$ density and
a reversible mixture of multivariate $t$ densities. The proofs of all results in this section are in the Appendix.

%----------------------------------------------------------------------%
\subsection{Some theory for reversible proposals}\label{subsec:theory for reversible proposals}
%----------------------------------------------------------------------%
\begin{definition}[Reversible transition density]\label{def: reversible transition density}
Suppose that $\zeta(z)$ is a density in $z$ and $T(z|x)$ is a transition
density from $x$ to $z$ such that
\[
\zeta(x)T(z|x)=\zeta(z)T(x|z)\text{ \ for any }x\text{ and }z.
\]
Then,
\[
\zeta(z)=\int\zeta(x)T(z|x)dx,
\]
and we say that $T(z|x)$ is a reversible Markov transition density with
invariant density $\zeta(z)$.
\end{definition}

The following two lemmas provide some properties of reversible transition
densities that are used in our work.

\begin{lemma}
[Properties of reversible transition densities]%
\label{lemma: properties of reversible densities} Suppose that $\zeta(z)$ is a
density in $z$. Then, in each of the cases described below, $T(z|x)$ is a
reversible Markov transition density with invariant density $\zeta(z)$.

\begin{itemize}
\item [(i)]$T(z|x)=\zeta(z)$.

\item [(ii)] Let $z = (z_{A},z_{B})$ be a partition of $z$ and define $T(z|x)=\zeta
(z_{A}|z_{B})I(z_{B}=x_{B}),$ where $I(z_{B}=x_{B})$ is an indicator variable
that is 1 if $z_{B}=x_{B}$ and is zero otherwise.

\item [(iii)]Suppose that for each parameter value $\rho$, $T(z|x;\rho)$ is a reversible
Markov transition density with invariant density $\zeta(z)$. Let $T(z|x)=\int
T(z|x;\rho)\lambda(d\rho)$, where $\lambda(d\rho)$ is a probability measure in
$\rho$.
\end{itemize}
\end{lemma}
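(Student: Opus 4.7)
The plan is to verify the detailed balance condition $\zeta(x)T(z|x) = \zeta(z)T(x|z)$ in each of the three cases, since Definition \ref{def: reversible transition density} then immediately gives that $\zeta$ is invariant for $T$. All three checks are essentially algebraic, so the strategy is simply to substitute the definition of $T$ and rearrange.

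For case (i), the claim is trivial: with $T(z|x) = \zeta(z)$, both sides of the detailed balance equation equal $\zeta(x)\zeta(z)$, and integrating against $x$ gives $\int \zeta(x)\zeta(z)\,dx = \zeta(z)$ as required.

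For case (ii), I would factor $\zeta(x) = \zeta(x_A|x_B)\zeta(x_B)$ and similarly for $\zeta(z)$. The left-hand side of detailed balance becomes $\zeta(x_A|x_B)\zeta(x_B)\zeta(z_A|z_B)I(z_B = x_B)$, and the right-hand side becomes $\zeta(z_A|z_B)\zeta(z_B)\zeta(x_A|x_B)I(x_B = z_B)$. The key observation is that both expressions vanish unless $x_B = z_B$, in which case $\zeta(x_B) = \zeta(z_B)$ and the two expressions coincide.

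For case (iii), reversibility of the mixture follows by pulling $\zeta(x)$ inside the integral against $\lambda$, applying the reversibility of each kernel $T(\cdot|\cdot;\rho)$ pointwise in $\rho$, and then pulling $\zeta(z)$ back out:
\[
\zeta(x)T(z|x) = \int \zeta(x)T(z|x;\rho)\lambda(d\rho) = \int \zeta(z)T(x|z;\rho)\lambda(d\rho) = \zeta(z)T(x|z).
\]
None of the steps is an obstacle; the only item warranting brief care is Fubini's theorem to justify interchanging the $\rho$-integral with the $x$-integral (when deducing invariance from detailed balance), which is immediate because all integrands are nonnegative.
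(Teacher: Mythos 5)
Your proof is correct and follows essentially the same route as the paper: verify detailed balance directly in each case, writing $\zeta(z_A|z_B)$ as $\zeta(z)/\zeta(z_B)$ (equivalently factoring $\zeta$) in case (ii) and exchanging the $\rho$-mixture with detailed balance in case (iii). The remark about Fubini is a harmless extra detail the paper leaves implicit.
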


The next lemma gives a result on a mixture of transition densities each having
its own invariant density.

\begin{lemma}[Mixture of reversible transition  densities]\label{lemma: reversible mixture}
\begin{itemize}
\item [(i)]
Suppose that for each $k=1,\dots,G$, $T_{k}(z|x)$ is a reversible Markov
transition kernel with invariant density $\zeta_{k}(z)$. Define the mixture
density $\zeta(z)$ and the mixture  $T(z|x)$ of transition densities as
\[
\zeta(z)=\sum_{k=1}^{G}\omega_{k}\zeta_{k}(z)\text{ and }T(z|x)=\sum_{k=1}%
^{G}\omega(k|x)T_{k}(z|x),
\]
where $\omega_1 + \cdots + \omega_G = 1$, $\omega_{k}\geq0$ and $\omega
(k|x)=\omega_{k}\zeta_{k}(x)/\zeta(x)$ for all $k=1,\dots,G$. Then, $T(z|x)$
is a reversible Markov transition density with invariant density $\zeta(z)$.
\item [(ii)] If the invariant densities $\zeta_k(z)$ are all the same, then $\omega(k|x) = \omega_k$  for all $k$
and $\zeta(z) = \zeta_1(z)$.
\item [(iii)]
Suppose that $T(z|x)$ is a reversible Markov transition density with invariant density $\zeta(z)$. Then
$q(z|x) = \omega \zeta(z) + (1- \omega ) T(z|x)$, $0\leq\omega\leq1$, is a reversible Markov transition density
with invariant density $\zeta (z)$.
\end{itemize}
\end{lemma}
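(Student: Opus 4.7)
The plan is to verify the detailed-balance identity $\zeta(x)T(z|x)=\zeta(z)T(x|z)$ directly for part (i), and then derive parts (ii) and (iii) as easy consequences. The detailed balance identity implies the invariance claim because integrating both sides in $x$ gives $\int \zeta(x)T(z|x)\,dx=\zeta(z)\int T(x|z)\,dx=\zeta(z)$, so once detailed balance is established the ``invariant density'' statement is automatic.

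For part (i), I would start from the left-hand side and substitute the definitions. Using $\omega(k|x)=\omega_{k}\zeta_{k}(x)/\zeta(x)$, the factor $\zeta(x)$ cancels and I get
\[
\zeta(x)T(z|x)=\sum_{k=1}^{G}\omega_{k}\zeta_{k}(x)T_{k}(z|x).
\]
Now I apply the reversibility of each component kernel, $\zeta_{k}(x)T_{k}(z|x)=\zeta_{k}(z)T_{k}(x|z)$, term by term. Running the same manipulation in reverse on the right (regrouping $\omega_{k}\zeta_{k}(z)=\omega(k|z)\zeta(z)$) gives $\zeta(z)T(x|z)$, completing the symmetry. The only thing to be mildly careful about is ensuring $\zeta(x)>0$ so the weights $\omega(k|x)$ are well defined; this holds wherever the chain actually visits, and at points where $\zeta(x)=0$ the identity is trivial.

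Part (ii) is an immediate check: if every $\zeta_{k}$ equals some common $\zeta_{1}$, then $\zeta(x)=\sum_{k}\omega_{k}\zeta_{1}(x)=\zeta_{1}(x)$, and substituting into the definition of $\omega(k|x)$ yields $\omega(k|x)=\omega_{k}$.

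Part (iii) follows by reducing to a two-component instance of (i). I would take $G=2$, set $\zeta_{1}=\zeta_{2}=\zeta$, choose $T_{1}(z|x)=\zeta(z)$ (which is reversible with invariant density $\zeta$ by Lemma \ref{lemma: properties of reversible densities}(i)) and $T_{2}(z|x)=T(z|x)$, with mixing weights $\omega$ and $1-\omega$. Because the component invariant densities coincide, part (ii) tells me $\omega(k|x)=\omega_{k}$, so the mixture transition collapses exactly to $\omega\zeta(z)+(1-\omega)T(z|x)=q(z|x)$, and part (i) gives reversibility and invariance of $\zeta$. I do not expect any real obstacle here; the only subtlety worth flagging is the role played by the state-dependent mixing weights $\omega(k|x)$ in (i), since using the unconditional weights $\omega_{k}$ would break detailed balance whenever the $\zeta_{k}$ differ.
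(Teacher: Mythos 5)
Your proof is correct and follows essentially the same route as the paper: part (i) by the same direct detailed-balance computation with the state-dependent weights $\omega(k|x)$, part (ii) by immediate substitution, and part (iii) by viewing $q(z|x)$ as a two-component mixture of the reversible kernel $\zeta(z)$ (Lemma~\ref{lemma: properties of reversible densities}(i)) and $T(z|x)$ sharing the same invariant density, exactly as the paper's appeal to parts (i)--(ii). The extra remarks on $\zeta(x)>0$ and on why constant weights would break detailed balance are fine but not needed.
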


\begin{corollary} [Mixture of conditional densities]\label{corr: mixt cond}
Let $z = (z_{A},z_{B})$ be a partition of $z$ and define $T_k(z|x)=\zeta_k
(z_{A}|z_{B})I(z_{B}=x_{B}),$ for $k=1, \dots, G$. Then, each  $T_k(z|x)$  is a reversible density
with invariant density $\zeta_k(z)$, and Lemma~\ref{lemma: reversible mixture} holds.
\end{corollary}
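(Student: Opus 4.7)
The plan is essentially a two-step application of the two preceding lemmas; almost no new work is needed because the corollary is a specialization whose ingredients have already been established.

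First, I would verify that each $T_k(z|x)=\zeta_k(z_A|z_B)\,I(z_B=x_B)$ is itself a reversible Markov transition density with invariant density $\zeta_k(z)$. This is an immediate instance of Lemma~\ref{lemma: properties of reversible densities}(ii), applied with the target density taken to be $\zeta_k$ (rather than a generic $\zeta$) and with the same partition $z=(z_A,z_B)$. The cited lemma guarantees both the detailed-balance identity $\zeta_k(x)T_k(z|x)=\zeta_k(z)T_k(x|z)$ and that $\zeta_k$ is the invariant marginal; no modification of the argument is required, since the partition-based form of $T_k$ matches exactly the form covered there.

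Next, with each $T_k$ reversible with respect to its own $\zeta_k$, I would directly invoke Lemma~\ref{lemma: reversible mixture}(i). Forming weights $\omega_1,\dots,\omega_G\geq0$ with $\sum_k\omega_k=1$, defining $\zeta(z)=\sum_k\omega_k\zeta_k(z)$ and $\omega(k|x)=\omega_k\zeta_k(x)/\zeta(x)$, and setting $T(z|x)=\sum_k\omega(k|x)T_k(z|x)$, the lemma then yields reversibility of $T(\cdot|\cdot)$ with respect to the mixture density $\zeta(z)$. Parts (ii) and (iii) of that lemma apply verbatim to the present $T_k$'s as well, since their conclusions depend only on the hypotheses about the component kernels, which we have just verified.

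Because both steps are citations to already-proven lemmas, there is no genuine obstacle in the proof. The only thing worth being careful about is the interpretation of the indicator factor $I(z_B=x_B)$ when writing detailed balance against the dominating measure: strictly speaking $T_k$ is a transition distribution that is a density only on the $z_A$-section $\{z_B=x_B\}$, so detailed balance must be read as the identity of the two measures $\zeta_k(x)\,T_k(dz|x)$ and $\zeta_k(z)\,T_k(dx|z)$, both of which are supported on the diagonal $\{z_B=x_B\}$. This measure-theoretic reading is exactly what Lemma~\ref{lemma: properties of reversible densities}(ii) already handles, so no extra argument is needed here, and the corollary follows.
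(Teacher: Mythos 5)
Your proof is correct and follows exactly the paper's own argument: apply Lemma~\ref{lemma: properties of reversible densities}(ii) to each $\zeta_k$ to get reversibility of $T_k$ with invariant density $\zeta_k$, then invoke Lemma~\ref{lemma: reversible mixture}. The additional remark on reading detailed balance as an identity of measures supported on $\{z_B=x_B\}$ is a sensible clarification but not a departure from the paper's route.
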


The next lemma shows the usefulness of using reversible Markov transition
densities as proposals in a Metropolis-Hastings scheme.

\begin{lemma}[Acceptance probability for a reversible proposal]
\label{lem: reversible accept} Consider a target density $\pi(z)$. We
propose $z$, given the state $x$ from the reversible Markov transition density
$T(z|x)$, which has invariant density $\zeta(z)$. Then, the acceptance
probability of the proposal is
\[
\alpha(z,x)=\min\left\{  1,\frac{\pi(z)\zeta(x)}{\pi(x)\zeta(z)}\right\}  .
\]
\end{lemma}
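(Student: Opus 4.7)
The plan is essentially a one-line substitution based on the detailed balance property in Definition~\ref{def: reversible transition density}. I would start by writing down the standard Metropolis--Hastings acceptance probability for a proposal $T(z|x)$ with target $\pi$:
\[
\alpha(z,x) \;=\; \min\left\{1,\;\frac{\pi(z)\,T(x|z)}{\pi(x)\,T(z|x)}\right\}.
\]

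Next I would invoke the hypothesis that $T(z|x)$ is reversible with invariant density $\zeta(z)$. By Definition~\ref{def: reversible transition density}, this means $\zeta(x)T(z|x)=\zeta(z)T(x|z)$ for all $x,z$, and hence, on the set where both sides are positive,
\[
\frac{T(x|z)}{T(z|x)} \;=\; \frac{\zeta(x)}{\zeta(z)}.
\]
Substituting this into the MH ratio yields the claimed expression
\[
\alpha(z,x) \;=\; \min\left\{1,\;\frac{\pi(z)\,\zeta(x)}{\pi(x)\,\zeta(z)}\right\}.
\]

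The only minor obstacle is handling the measure-zero set on which $T(z|x)$ (equivalently, $\zeta(z)T(x|z)$) vanishes; there the ratio is conventionally defined so that the proposal is rejected, and detailed balance ensures the two conventions agree. Once that is noted, nothing remains but the substitution, so the proof is essentially immediate and no further machinery is required.
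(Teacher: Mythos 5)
Your proof is correct and follows exactly the same route as the paper's: the paper's one-line argument is precisely that detailed balance gives $T(z|x)/T(x|z)=\zeta(z)/\zeta(x)$, which is then substituted into the standard Metropolis--Hastings ratio. Your extra remark about the measure-zero set where the densities vanish is a harmless refinement the paper leaves implicit.
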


The lemma shows that the acceptance probability has the same form as for an
independent proposal from the invariant density $\zeta(z)$, even though the
proposal may depend on the previous state and on parameters that are in the
transition density $T(z|x)$ but not in $\zeta(z)$. This means the following:
(i) To compute the acceptance probability $\alpha(z,x),$ it is only necessary to be
able to simulate from $T(z|x),$ and it is unnecessary to be able to compute it.
This is useful for our work where we cannot evaluate $T(z|x)$ analytically
because it is a mixture over a parameter $\rho$.
(ii) The acceptance probability will be high if the invariant density $\zeta(z)$
is close to the target density $\pi(z)$. In fact, if $\zeta(z) = \pi(z)$, then
the acceptance probability is 1.

%----------------------------------------------------------------------%
\subsection{Constructing reversible $t$ distributions}\label{subsec:correlated}
%----------------------------------------------------------------------%
\cite{Pitt:2006} construct a
univariate Markov transition which has a univariate $t$ distribution as the invariant
distribution. We now extend this approach to construct a Markov transition density with a
multivariate $t$ density as its invariant distribution.
This reversible multivariate $t$ process is new to the literature.
We then generalize it to the case in which the invariant
distribution is a mixture of multivariate $t$ distributions.
We denote by $t_d(z;\mu,\Sigma,\nu)$ the $d$-variate $t$ density
with location vector $\mu$, scale matrix $\Sigma$ and degrees of freedom $\nu$.
\begin{lemma}[Reversible $t$ transition density]\label{lem: reversible t}
Let $\zeta(z;\psi)=t_{d}(z;\mu,\Sigma,\nu)$ and $T(z|x;\psi,\rho)=t_{d}(z; \mutilde(x)
,\Sigmatilde(x),\widetilde{\nu})$, where $\psi$ is the set of parameters
$\{\mu,\Sigma,\nu\}$, $\rho$ is a correlation coefficient,
\begin{align} \label{eq: tilde defs}
\mutilde(x) &  = (1-\rho )\mu +\rho x,\;\;\; \Sigmatilde(x)  = \frac{\nu }{%
\nu +d}(1-\rho ^{2})\left( 1+\frac{1}{\nu }(x-\mu )^\prime \Sigma ^{-1}(x-\mu
)
\right) \Sigma,\;\;\;\; \nutilde =\nu +d.
\end{align}
Then,
\begin{itemize}
\item  [(i)] For each fixed $\rho$, $T(z|x;\psi, \rho)$ is a reversible Markov
transition density with invariant density $\zeta(z;\psi)$.  \item [(ii)]  Let
$T(z|x;\psi)=\int T(z|x;\psi,\rho)\lambda(d\rho),$ where $\lambda(d\rho)$ is a
probability measure. Then, $T(z|x;\psi)$ is a reversible Markov transition
density with invariant density $\zeta(z;\psi)$.
\end{itemize}
\end{lemma}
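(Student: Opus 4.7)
The plan for part (i) is to exhibit $T(z|x;\psi,\rho)$ as the conditional of a symmetric joint density, which makes reversibility immediate. I would introduce a latent scale $W\sim\mathrm{Gamma}(\nu/2,\nu/2)$ and construct $(x,z)$ conditionally on $W$ by $x\mid W\sim N_d(\mu,\Sigma/W)$ and $z\mid x,W\sim N_d\bigl((1-\rho)\mu+\rho x,\,(1-\rho^{2})\Sigma/W\bigr)$. A direct moment calculation shows that $(x,z)\mid W$ is jointly Gaussian with both marginals equal to $N_d(\mu,\Sigma/W)$ and cross-covariance $\rho\Sigma/W$, so its joint density is symmetric in the pair $(x,z)$. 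Integrating out $W$ preserves this symmetry, yielding $p(x,z)=p(z,x)$ for the unconditional joint density.

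Next I would identify the two factors of this joint density. The marginal of $x$ is the usual normal--Gamma scale mixture $t_d(\mu,\Sigma,\nu)=\zeta(x;\psi)$. For the conditional of $z$ given $x$, a conjugate calculation gives $W\mid x\sim \mathrm{Gamma}\bigl((\nu+d)/2,\,(\nu+Q)/2\bigr)$ with $Q=(x-\mu)^{\top}\Sigma^{-1}(x-\mu)$. Rescaling $W$ by the factor $(\nu+Q)/(\nu+d)$ so that its rate matches its shape, and then re-expressing $z\mid x$ as a normal scale mixture, shows that $z\mid x\sim t_d(\tilde\mu(x),\tilde\Sigma(x),\tilde\nu)$ with precisely the quantities in \eqref{eq: tilde defs}, i.e.\ $z\mid x$ has density $T(z|x;\psi,\rho)$. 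Combined with the symmetry $p(x,z)=p(z,x)$, this gives $\zeta(x;\psi)T(z|x;\psi,\rho)=\zeta(z;\psi)T(x|z;\psi,\rho)$, which is the reversibility identity of Definition \ref{def: reversible transition density}.

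Part (ii) then follows immediately from part (i) and Lemma \ref{lemma: properties of reversible densities}(iii): every $T(z|x;\psi,\rho)$ is reversible with common invariant density $\zeta(z;\psi)$, so their mixture $T(z|x;\psi)=\int T(z|x;\psi,\rho)\lambda(d\rho)$ is reversible with the same invariant density. The main bookkeeping obstacle lies in part (i), specifically in the Gamma rescaling: one has to track the factor $(\nu+Q)/(\nu+d)$ carefully to recover the precise $\tilde\Sigma(x)$ and $\tilde\nu$ of \eqref{eq: tilde defs}. Everything else is a structural consequence of the symmetry of a bivariate AR(1)-type normal given $W$.
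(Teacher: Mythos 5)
Your proof is correct and rests on exactly the same construction as the paper's: the normal--Gamma scale-mixture representation of the $t$ together with the conditional AR(1)-type Gaussian kernel, whose joint law given the latent scale is exchangeable in $(x,z)$ (the paper encodes this as $g(x|\lambda)T(z|x;\rho,\lambda)=g(z|\lambda)T(x|z;\rho,\lambda)$ with an inverse-gamma variance $\lambda$, you use a Gamma precision $W$), followed by identifying $W\mid x$ and rescaling to recover $\tilde\Sigma(x)$ and $\tilde\nu$, and invoking Lemma \ref{lemma: properties of reversible densities}(iii) for part (ii). The only cosmetic differences are that you phrase reversibility as symmetry of the integrated joint density and work with general $(\mu,\Sigma)$ directly, whereas the paper chains the two conditional identities after reducing to $\mu=0$, $\Sigma=I$ and then undoes the standardization by a linear transformation.
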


We now follow Lemma~\ref{lemma: reversible mixture} and define a reversible transition density
that is a mixture of reversible $t$ transition densities. Suppose
$\zeta_k(z; \psi_k) = t_d(z; \mu_k, \Sigma_k, \nu_k)$ and $T_k(z|x;\psi_k,\rho_k)=t_{d}(z;\mutilde_k(x), \Sigmatilde_k(x),\nutilde_k)$,
where $\mutilde_k(x), \Sigmatilde_k(x)$ and $\nutilde_k$ are defined in terms of $(\mu_k, \Sigma_k, \rho_k)$ as in \eqref{eq: tilde defs}.
Let $\psi = \{\psi_1, \dots, \psi_G\}$ and $T_k(z|x; \psi_k) = \int  T_k(z|x; \psi_k, \rho_k)\lambda_k(d\rho_k) $.
\begin{lemma} [Mixture of $t$ transition densities]\label{eq: mixt t transition densities}
Let
\begin{align}\label{eq: mixture of t}
g_M(z; \psi) & = \sum_{k=1}^G \omega_k\zeta_k(z; \psi_k) \text{   and  }   T_{g_M}^\text{CMH}(z|x;\psi) = \sum_{k=1}^G \omega (k|x)  T_k(z|x;\psi_k),
\end{align}
where $\omega (k|x)  = \omega_k \zeta_k(x; \psi_k)/g_M(x; \psi)$.
Then,
\begin{itemize}
\item [(i)] $g_M(z; \psi)$ is a mixture of $t$ densities and $T_{g_M}^\text{CMH}(z|x;\psi)$ is a mixture of $t$ transition densities with $T_{g_M}^\text{CMH}(z|x;\psi)$  a reversible transition density with invariant $g_M(z; \psi)$;
    \item [(ii)]
if the proposal density is $T_{g_M}^\text{CMH}(z|x;\psi)$ and the target density is $\pi(z)$, then the \MH{} acceptance probability is
\begin{align}\label{eq: acc prob t mix}
\alpha(z,x) & = \min \left \{ 1, \frac{\pi(z)}{\pi(x)} \frac{ g_M(x; \psi)}{ g_M(z; \psi)} \right\}.
\end{align}
\end{itemize}
\end{lemma}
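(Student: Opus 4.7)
The plan is to obtain both parts as a straightforward assembly of the earlier reversibility lemmas, with no new calculation required beyond checking that the hypotheses line up.

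First I would establish that each component transition $T_k(z|x;\psi_k)$ is a reversible Markov transition density with invariant density $\zeta_k(z;\psi_k)$. This is immediate from Lemma~\ref{lem: reversible t}(ii), because $T_k(z|x;\psi_k)$ is defined precisely as the mixture over $\rho_k$ of the reversible $t$ transitions $T_k(z|x;\psi_k,\rho_k)$ against the probability measure $\lambda_k(d\rho_k)$. Part~(i) of that lemma says each conditional kernel (for fixed $\rho_k$) is reversible with invariant $\zeta_k$, and part~(ii) says the mixture over $\rho_k$ inherits that reversibility and invariance.

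Next, to get Part~(i) of the present lemma, I would invoke Lemma~\ref{lemma: reversible mixture}(i) with $G$ components, weights $\omega_k$, invariant densities $\zeta_k(z;\psi_k)$, and reversible transitions $T_k(z|x;\psi_k)$. The hypotheses of that lemma are exactly what we just verified, and the weights $\omega(k|x)=\omega_k\zeta_k(x;\psi_k)/g_M(x;\psi)$ in the definition of $T_{g_M}^\text{CMH}$ match those prescribed there. The conclusion yields that $T_{g_M}^\text{CMH}(z|x;\psi)$ is a reversible Markov transition density with invariant density $g_M(z;\psi)=\sum_k\omega_k\zeta_k(z;\psi_k)$, which is manifestly a mixture of multivariate $t$ densities.

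For Part~(ii), I would apply Lemma~\ref{lem: reversible accept} directly with proposal kernel $T(z|x)=T_{g_M}^\text{CMH}(z|x;\psi)$ and invariant density $\zeta(z)=g_M(z;\psi)$, giving the acceptance probability \eqref{eq: acc prob t mix}. The only thing worth highlighting in the write-up is the practical point that, although $T_{g_M}^\text{CMH}$ itself is not available in closed form (because of the integral over the $\rho_k$), Lemma~\ref{lem: reversible accept} shows that its pointwise value is never needed: only the marginal $g_M$ enters $\alpha(z,x)$. There is no genuine obstacle here, so the ``proof'' reduces to citing Lemmas~\ref{lem: reversible t}, \ref{lemma: reversible mixture}, and \ref{lem: reversible accept} in sequence.
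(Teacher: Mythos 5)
Your proposal is correct and follows essentially the same route as the paper: Part~(i) is Lemma~\ref{lemma: reversible mixture}(i) applied to the component kernels $T_k$ (whose reversibility with invariant $\zeta_k$ comes from Lemma~\ref{lem: reversible t}), and Part~(ii) is a direct application of Lemma~\ref{lem: reversible accept}. The only difference is that you make explicit the preliminary appeal to Lemma~\ref{lem: reversible t}(ii), which the paper leaves implicit.
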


We note that it is straightforward to generate from $T_{g_M}^\text{CMH}(z|x;\psi)$, given $x$, because it is a mixture of transition densities, each of which is a mixture. However, it is difficult to compute $T_{g_M}^\text{CMH}(z|x;\psi)$ because it is difficult to compute each of $T_k(z|x;\psi_k)$ as it is  a $t$ density mixed over $\rho_k$. However, by Part (ii) of Lemma~\ref{eq: mixt t transition densities},
it is straightforward to compute the acceptance probability $\alpha(z,x)$.

\subsection{Constructing reversible mixtures of conditional $t$ densities}\label{subsec:conditional}
Suppose that the vector $z$ has density $g_M(z; \psi)$ which is a mixture of multivariate $t$ densities as
in equation \eqref{eq: mixture of t}, with $\zeta_k(z; \psi_k) = t_d(z; \mu_k, \Sigma_k, \nu_k)$.
We partition $z$ as $z = (z_A, z_B)$, where $z_A $ is $d_A \times 1$ and we partition the $\mu_k$ and $ \Sigma_k$ conformally, as
\begin{align*}
\mu_k & = \begin{pmatrix}
\mu_{k,A} \\
\mu_{k,B}
\end{pmatrix}
\quad \text{and} \quad \Sigma_{k} = \begin{pmatrix} \Sigma_{k,AA} & \Sigma_{k,AB}\\
\Sigma_{k,BA} & \Sigma_{k,BB}
\end{pmatrix} , \quad k = 1, \dots, G .
\end{align*}
Then, $\zeta_k(z_A|z_B; \psi_k) = t_{d_A}(z_A; \mutilde_k(z_B), \Sigmatilde_k, \nutilde_k) $, where
$\mutilde_k(z_B) = \mu_{k,A}+\Sigma_{k,AB}\Sigma_{k, BB}^{-1}(z_B-\mu_{k,B})$, $\Sigmatilde_k = \Sigma_{k,AA}-\Sigma_{k,AB}\Sigma_{k, BB}^{-1}\Sigma_{k, BA},$ and
$\nutilde_k = \nu_k+d_A$.
\begin{lemma} [Reversible mixture of conditional densities] \label{lemma: rev cond densities}
Define the transition kernel
\beq\label{eq: cond t mixture}
T_{g_M}^\text{BS}(z|x; \psi) = \sum_{k=1}^G  \frac{\omega_k \zeta_k(x; \psi_k)  }{ g_M(x; \psi)} \zeta_k(z_A|z_B; \psi_k)I(z_B = x_B).
\eeq
Then $T_{g_M}^\text{BS}(z|x; \psi)$ is reversible with invariant density $g_M(z;\psi) $.
\end{lemma}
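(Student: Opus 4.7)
The plan is to recognize that this lemma is essentially a direct instantiation of Corollary \ref{corr: mixt cond} (Mixture of conditional densities) applied to the special case in which each base density $\zeta_k(z;\psi_k)$ is the multivariate $t$ density $t_d(z;\mu_k,\Sigma_k,\nu_k)$. So the proof amounts to verifying that the hypotheses of that corollary (equivalently, of Lemma \ref{lemma: properties of reversible densities}(ii) combined with Lemma \ref{lemma: reversible mixture}(i)) are met, and matching up the mixing weights.

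First I would define, for each $k = 1,\dots,G$, the component transition kernel
\[
T_k(z|x) \;=\; \zeta_k(z_A | z_B; \psi_k)\, I(z_B = x_B),
\]
where $\zeta_k(z_A | z_B; \psi_k)$ is the conditional density of $z_A$ given $z_B$ derived from the joint multivariate $t$ density $\zeta_k(z;\psi_k)$. The explicit form of this conditional (a multivariate $t$ with the stated $\mutilde_k(z_B), \Sigmatilde_k, \nutilde_k$) is a standard property of the multivariate $t$ distribution, so it is a bona fide conditional of $\zeta_k$. Then by Lemma \ref{lemma: properties of reversible densities}(ii), $T_k(z|x)$ is a reversible Markov transition density with invariant density $\zeta_k(z;\psi_k)$.

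Next I would apply Lemma \ref{lemma: reversible mixture}(i) to the family $\{T_k\}$ with weights $\omega_k$: the mixing weights prescribed by that lemma are
\[
\omega(k|x) \;=\; \frac{\omega_k\,\zeta_k(x;\psi_k)}{g_M(x;\psi)},
\]
which are exactly the weights appearing in the definition of $T_{g_M}^{\text{BS}}(z|x;\psi)$. Hence
\[
T_{g_M}^{\text{BS}}(z|x;\psi) \;=\; \sum_{k=1}^G \omega(k|x)\,T_k(z|x)
\]
is a reversible Markov transition density, and its invariant density is the corresponding mixture $\sum_{k=1}^G \omega_k\, \zeta_k(z;\psi_k) = g_M(z;\psi)$, as required.

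There is really no hard step here; the only thing to be careful about is ensuring that the weights in the definition of $T_{g_M}^{\text{BS}}$ match the $\omega(k|x)$ prescribed by Lemma \ref{lemma: reversible mixture}(i)—they do, by inspection—and that the conditional $\zeta_k(z_A|z_B;\psi_k)$ of the multivariate $t$ is a genuine conditional density (which is classical). If desired, I would also briefly mention that while $T_{g_M}^{\text{BS}}(z|x;\psi)$ is concentrated on the hyperplane $\{z : z_B = x_B\}$, the standard interpretation of $I(z_B = x_B)$ as a Dirac mass in $z_B$ combined with the Lebesgue density in $z_A$ makes the detailed-balance computation $g_M(x;\psi) T_{g_M}^{\text{BS}}(z|x;\psi) = g_M(z;\psi) T_{g_M}^{\text{BS}}(x|z;\psi)$ well-defined, and this is exactly what Lemma \ref{lemma: reversible mixture}(i) verifies in the abstract.
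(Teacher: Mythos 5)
Your proof is correct and follows essentially the same route as the paper, which proves the lemma by combining Lemma~\ref{lemma: properties of reversible densities}(ii) (reversibility of each conditional kernel $T_k(z|x)=\zeta_k(z_A|z_B;\psi_k)I(z_B=x_B)$ with invariant $\zeta_k$) with Lemma~\ref{lemma: reversible mixture}(i) for the state-dependent weights $\omega(k|x)=\omega_k\zeta_k(x;\psi_k)/g_M(x;\psi)$. Your additional remarks about the multivariate $t$ conditional and the interpretation of the indicator as a point mass in $z_B$ are harmless elaborations of the same argument.
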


%======================================================================%
\section{The adaptive correlated \MH{} (ACMH) sampler}\label{Sec:adaptive}
%======================================================================%
The way we  implemented the ACMH sampler is now
described, although Sections~\ref{SS: convergence results} and \ref{Sec:constructing} alow us to construct the sampling scheme in a number of ways.
Section \ref{Sec:estimating mixt} outlines the Variational Approximation method for estimating mixtures of multivariate $t$ distributions.
Sections \ref{Sec:blocking} and \ref{Sec:random} discuss component-wise sampling and adaptive random walk sampling.
Section \ref{Sec:ACMH} summarizes the ACMH sampler.

%----------------------------------------------------------------------%
\subsection{Estimating mixtures of multivariate $t$ densities}\label{Sec:estimating mixt}
%----------------------------------------------------------------------%
Given a mixture of multivariate $t$ densities, Section \ref{subsec:correlated} describes a method to construct reversible mixtures of $t$.
This section outlines a fast Variational Approximation method for estimating such a mixture of $t$.

Suppose that $p(D|\theta)$ is the likelihood computed under the assumption that the data generating process of $D$ is a
mixture of $t$ density $m(x|\theta)$, with $\theta$ its parameters.
Let $p(\theta)$ be the prior, then Bayesian inference is based on the posterior $p(\theta|D)$,
which is often difficult to handle.
Variational Approximation approximates this posterior
by a more tractable distribution $q_\text{va}(\t)$
by minimizing the Kullback-Leibler divergence
\begin{align*} % \label{eq: KL div}
\KL(q_\text{va}) & = \int \log \frac{q_\text{va}(\theta)}{p(\theta|D)} q_\text{va}(\theta) d\theta \ ,
\end{align*}
among some restricted class of densities $q_\text{va}\in\mathcal Q=\{q_\text{va}(\cdot|\lambda),\ \lambda\in\Lambda\}$.
Because,
\begin{eqnarray}
\log\; p(D)  & = & \int \log \frac{p(\theta)p(D|\theta)}{q_\text{va}(\theta)}q_\text{va}(\theta)d\theta
+\int \log \frac{q_\text{va}(\theta)}{p(D|\theta)}q_\text{va}(\theta) d\theta, \label{keyidentity}
\end{eqnarray}
minimizing $\KL(q_\text{va})$ is equivalent to maximizing
\begin{eqnarray}
L(\lambda) &=& \int \log \frac{p(\theta)p(D|\theta)}{q_\text{va}(\theta|\lambda)}q_\text{va}(\theta|\lambda)d\theta.   \label{lowerbd}
\end{eqnarray}
Because of the non-negativity of the Kullback-Leibler divergence term in (\ref{keyidentity}),
\eqref{lowerbd} is a lower bound on $\log\; p(D)$.
We refer the reader to \cite{Tran:2012} who describe in detail how to fit mixtures of $t$ using Variational Approximation,
in which the number of components is automatically selected using the split and merge algorithm by maximizing the lower bound.
The accuracy of Variational Approximation is experimentally studied in \cite{Nott:2012}.
See also \cite{Corduneanu:2001} and \cite{McGrory:2007} who use Variational Approximation for estimating mixtures of normals.

Denote by $\widehat\lambda$ the maximizer of \eqref{lowerbd},
the posterior $p(\theta|D)$ is approximated by $q_\text{va}(\theta|\widehat\lambda)$.
From our experience, the estimate $q_\text{va}(\theta|\widehat\lambda)$
often has a small tail, but it can quickly locate the mode of the true posterior $p(\theta|D)$.
In our context, $m(x|\widehat\theta)$
with $\widehat\theta$ the mode of $q_\text{va}(\theta|\widehat\lambda)$
is used as the mixture of $t$ in the ACMH sampler.

We now explain more fully the main difference between the Variational Approximation approach to constructing proposal densities
of \cite{Freitas:2001} and our approach.
\cite{Freitas:2001} estimate $\pi(x)$ directly as $\pihat_{\vb} (x)$ using Variational Approximation,
i.e. $\pihat_{\vb}$ minimizes
\beqn
\KL(\pi_\text{va})=\int\log\frac{\pi_\text{va}(x)}{\pi(x)}\pi_\text{va}(x)dx
\eeqn
among some restricted class of densities, such as normal densities.
\cite{Freitas:2001} then use $\pihat_{\vb} (x)$ to form the fixed proposal density.
The estimate $\pihat_{\vb} (x) $ often has much lighter tails than $\pi(x)$ \citep[see, e.g.,][]{Freitas:2001},
therefore such a direct use of Variational Approximation estimates for the proposal density in MCMC can be problematic.
Another problem with their approach is that $\pihat_{\vb} (x) $ needs
to be derived afresh for each separate target density $\pi(x)$ and this may be difficult for some
targets.
In our approach
we act as if the target density $\pi(x)$ is a $t$ mixture $m(x|\theta)$ with parameters $\theta$,
and obtain a point estimate of $\theta$ using Variational Approximation.
The approach is general because it
is the same for all targets, and does not suffer from the problem of light tails.

%----------------------------------------------------------------------%
\subsection{Metropolis within Gibbs component-wise sampling}\label{Sec:blocking}
%----------------------------------------------------------------------%
In high dimensions, generating the whole proposal vector at
the one time may lead to a large difference between the values of the target
$\pi$,  and the proposal $q$, at the proposed and current states.
This may result in high rejection rates in the Metropolis-Hastings
algorithm, making it hard for the sampling scheme to move. To
overcome this problem, we use Metropolis within Gibbs component-wise sampling
in which the coordinates of $x$ are divided into two or more components at each iterate.
Without loss of generality, it is only necessary to consider two components,
a component  $x_B$ that remains unchanged
and a complementary component  $x_A$
generated conditional on $x_B$.
We will refer to $B$ and $A$ as the index vectors of $x_B$ and $x_A$ respectively.
Let $d_B$ and $d_A$ be the dimensions of $B$ and $A$, respectively.
We note that $B$ and $A$ can change at random or systematically
from iteration to iteration. See \cite{Johnson:2011} for a further
discussion of Metropolis within Gibbs sampling.

We can carry out a Metropolis within Gibbs sampling step based on
reversible mixtures of conditional $t$ distributions as in Section~\ref{subsec:conditional}.

In some applications, there are natural groupings of the parameters,
such as the group of mean parameters and the group of variance parameters.
Otherwise, the coordinates $x_B$ can be selected randomly.
For example, each coordinate is independently included in $B$ with probability $p_B$.
The number of  coordinates $d_A$ in $x_A$ should be kept small
in order for the chain to move easily.
We find that it is useful to set $p_B$ so that the expected value of $d_A$ is about 10, i.e.,
 $p_B\approx 1-10/d$.

%----------------------------------------------------------------------%
\subsection{The adaptive random walk \MH{} proposal}\label{Sec:random}
%----------------------------------------------------------------------%
Using mixtures of $t$ for the proposal distribution helps to quickly and efficiently locate the modes
of the target distribution.
In addition, it is useful to add some random walk \MH{} steps
to explore more effectively the tail regions around the local modes; see Section~\ref{Exa:mixture}.

We use the following version of an adaptive random walk step,
which takes into account the potential multimodality of the target.
Let $x$ be the current state and $g_M(x; \psi)$  the latest mixture of $t$ as in \eqref{eq: mixture of t}.
Let $\khat(x; \psi) =\argmax_k\{\omega_k \zeta_k(x; \psi_k)\} $, %
i.e. $\khat (x; \psi) $ is the index of the component of the mixture that $x$ is most likely to belong to.
Let $\phi_d(x; a, B)$ be a $d$-variate normal density with mean vector $a$ and covariate matrix $B$.
The  random walk proposal density is $q^{RW}(z|x; \khat (x; \psi) ) = \phi_d (z; x, \kappa \Sigmatilde _{\khat(x; \psi) })$,
where $\Sigmatilde _{\khat(x; \psi)}  = \nu_{\khat(x; \psi) }/(\nu_{\khat(x; \psi) }-2)\Sigma_{\khat(x, \psi)}$%
if $\nu_{\khat(x; \psi)}>2$ and is equal to $\Sigma_{\khat(x; \psi)}$ otherwise.
The scaling factor $\kappa  = 2.38^2/d$ \citep[see][]{Roberts:2009}.

%----------------------------------------------------------------------%
\subsection{Description of the ACMH sampler}\label{Sec:ACMH}
%----------------------------------------------------------------------%
This section gives the details of the ACMH sampler,
which follows the framework in Section \ref{Sec:theory} to ensure convergence.
The sampler consists of a reversible proposal density together
with a random walk proposal. We shall first describe the reversible proposal density.
Let $g_0(z)$ be the heavy tailed component and $g_M(z; \psihat(\calHn))$ the mixture of $t$ densities
described in \eqref{eq: mixture of t}, where $\calHn$ is the history vector obtained after iteration $n$ based on the trial chain
$X^\prime $ and $\psihat$ is the
estimate of $\psi$ based on $\calHn$.
Let $T_{g_0} (z|x) = g_0(z)$ be the reversible transition density whose invariant
density is $g_0(z)$ (see Part (i) of
Lemma~\ref{lemma: properties of reversible densities}  ). Let
$T_{g_M}^\text{CMH}(z|x; \psihat(\calHn) ) $ be the correlated reversible transition density  defined  in equation
\eqref{eq: mixture of t} and  let $T_{g_M}^\text{BS}(z|x; \psihat(\calHn) ) $ be the component-wise mixture reversible transition
density  defined in \eqref{eq: cond t mixture}. We now define the mixtures,
\begin{align*}
T_{g_M} (z|x; \psihat(\calHn)) & = (1-\gamma) T_{g_M}^\text{CMH}(z|x; \psihat(\calHn) ) +
\gamma T_{g_M}^\text{BS}(z|x; \psihat(\calHn) )\\
q^\ast (z; \psihat(\calHn)) & = \beta_0 g_0(z) + (1- \beta_0) g_M(z; \psihat(\calHn) ) \\
T_{q^\ast} (z|x; \psihat(\calHn)) & = \frac{\beta_0 g_0(x)} { q^\ast (x; \psihat(\calHn))} T_{g_0}(z|x) + \frac{(1-\beta_0)g_M(x; \psihat(\calHn) )  }{q^\ast (x; \psihat(\calHn))} T_{g_M} (z|x,; \psihat(\calHn)),
\end{align*}
and
\begin{align}\label{e:proposal}
q(z|x; \psihat(\calHn)) & = \delta q^\ast (z; \psihat(\calHn)) + (1-\delta ) T_{q^\ast} (z|x; \psihat(\calHn))\notag \\
& = \delta \beta_0 g_0(z) + \delta (1-\beta_0) g_M(z; \psihat(\calHn) ) + (1-\delta)\beta_0 \frac{g_0(x)}{q^\ast (x; \psihat(\calHn))} T_{g_0}(z|x) + \\ & + (1-\delta) ( 1- \beta_0) \frac{ g_M(x; \psihat(\calHn)   }  {q^\ast (x; \psihat(\calHn))} T_{g_M} (z|x; \psihat(\calHn)),
\end{align}
with $0 \leq \gamma, \beta_0 , \delta \leq 1$.
Note that $\delta$ is the probability of generating an independent proposal
and $\gamma$ is related to the probability of doing component-wise sampling.
Then,
%------------------------------%
\begin{lemma} \label{lemma: acmh proposal}
%------------------------------%
\begin{itemize}
\item [(i)] $T_{g_M} (z|x; \psihat(\calHn)) $ is a reversible Markov transition density with invariant density $g_M(z; \psihat(\calHn))$.
\item [(ii)] $T_{q^\ast} (z|x; \psihat(\calHn)) $ is a reversible Markov transition density with invariant density $q^\ast (z; \psihat(\calHn)) )$.
\item [(iii)] $q(z|x; \psihat(\calHn))$ is a reversible Markov transition density with invariant density $q^\ast (z; \psihat(\calHn)) )$.
\item [(iv)] If $q(z|x; \psihat(\calHn))$ is a proposal density with target density $\pi(z)$, then
the acceptance probability is
\begin{align*}
\alpha(z, x; \psihat(\calHn)) & = \min\left  \{ 1, \frac{\pi(z)   } { \pi(x) }
 \frac{q^\ast (x; \psihat(\calHn)) }{q^\ast (z; \psihat(\calHn)) }\right \}.
\end{align*}
\end{itemize}
\end{lemma}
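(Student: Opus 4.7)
The proof decomposes into four parts, and my plan is to read each one as a direct instantiation of the building-block lemmas from Section~\ref{subsec:theory for reversible proposals}. The overall strategy is pattern matching: identify the component transition densities, verify each is reversible with the claimed invariant density, and then invoke the appropriate mixture result.

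For (i), I observe that $T_{g_M}^\text{CMH}(z|x;\psihat(\calHn))$ is reversible with invariant $g_M(z;\psihat(\calHn))$ by Lemma~\ref{eq: mixt t transition densities}(i), and $T_{g_M}^\text{BS}(z|x;\psihat(\calHn))$ is reversible with the same invariant $g_M(z;\psihat(\calHn))$ by Lemma~\ref{lemma: rev cond densities}. Since both components share a common invariant density, I apply Lemma~\ref{lemma: reversible mixture}(ii), which states that in this case the mixture weights remain constant and the invariant density is unchanged. Hence $T_{g_M} = (1-\gamma) T_{g_M}^\text{CMH} + \gamma T_{g_M}^\text{BS}$ is reversible with invariant $g_M(z;\psihat(\calHn))$.

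For (ii), I note that $T_{g_0}(z|x) = g_0(z)$ is reversible with invariant $g_0(z)$ by Lemma~\ref{lemma: properties of reversible densities}(i), while part (i) just above gives $T_{g_M}$ as reversible with invariant $g_M(z;\psihat(\calHn))$. The definition of $T_{q^\ast}$ has exactly the canonical mixture structure of Lemma~\ref{lemma: reversible mixture}(i) with $G=2$, base weights $\omega_1=\beta_0$ and $\omega_2=1-\beta_0$, invariant densities $\zeta_1=g_0$ and $\zeta_2=g_M$, mixture $q^\ast=\beta_0 g_0+(1-\beta_0)g_M$, and state-dependent weights $\omega(k|x)=\omega_k \zeta_k(x)/q^\ast(x)$. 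Applying Lemma~\ref{lemma: reversible mixture}(i) directly yields reversibility of $T_{q^\ast}$ with invariant $q^\ast(z;\psihat(\calHn))$.

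Part (iii) follows immediately from part (ii) combined with Lemma~\ref{lemma: reversible mixture}(iii): having established that $T_{q^\ast}$ is reversible with invariant $q^\ast$, the density $q(z|x;\psihat(\calHn)) = \delta q^\ast(z;\psihat(\calHn)) + (1-\delta) T_{q^\ast}(z|x;\psihat(\calHn))$ is exactly the form $\omega \zeta(z)+(1-\omega)T(z|x)$ covered by that lemma, so it is reversible with the same invariant $q^\ast$. Finally, part (iv) is a direct application of Lemma~\ref{lem: reversible accept} with target $\pi$ and reversible proposal $q$ having invariant $q^\ast$, yielding $\alpha(z,x) = \min\{1, \pi(z)q^\ast(x;\psihat(\calHn))/(\pi(x) q^\ast(z;\psihat(\calHn)))\}$. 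No obstacle of substance arises; the only thing to be careful about is verifying that the state-dependent weights $\beta_0 g_0(x)/q^\ast(x)$ and $(1-\beta_0)g_M(x)/q^\ast(x)$ in $T_{q^\ast}$ match precisely the form required by Lemma~\ref{lemma: reversible mixture}(i), so that the mixture identity is invoked with the correct normalization rather than introducing spurious Jacobian-like factors.
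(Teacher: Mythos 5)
Your proof is correct and takes essentially the same route as the paper: parts (i)--(iii) are obtained by invoking parts (ii), (i) and (iii) of Lemma~\ref{lemma: reversible mixture} respectively, and part (iv) by Lemma~\ref{lem: reversible accept}, exactly as the paper does. The only difference is that you spell out explicitly where the reversibility of the individual components ($T_{g_M}^\text{CMH}$, $T_{g_M}^\text{BS}$ and $T_{g_0}$) comes from, which the paper leaves implicit.
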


\paradot{Description of the ACMH proposal density}
The ACMH sampler consists of a reversible proposal density together
with a random walk proposal.
Let $p_n(x_{n-1},dx_n)$ be the transition kernel at iteration $n$ of the main chain $X$.
Denote by $p_{1,n}$, $p_{2,n}$ the transition kernel with respect to
the reversible proposal $q(z|x; \psihat(\calHn))$ and the random walk proposal $q^{AR}(z|x; \khat (x; \psihat(\calHn)) )$ respectively.
\begin{itemize}
\item [(1)] $p_n = p_{1,n}p_{2,n}$ at $n = \iota_{RW},\ 2\iota_{RW}, ...$ (see Corollary \ref{corollary} (iv)).
That is, a composition of a correlated \MH{} step with reversible proposal and a random walk step is performed after every $\iota_{RW}-1$ iterations.
In our implementation we take $\iota_{RW}=10$.
\item [(2)]
In all the other steps, we  take  $p_n = p_{1,n}$.
\end{itemize}
%-----------------------------------------------%
\paradot{Convergence of the ACMH sampler}
%-----------------------------------------------%
If we choose $g_{0}(z)$ such that $g_{0}(z)\geq \beta_0\pi(z)$
for some $0<\beta_0<1$, then $q(z|x; \psihat(\calHn))\geq \delta\beta_{0}=\beta$.
By Corollary \ref{corollary}, we have a formal justification of the convergence of the ACMH sampler.

%-------------------------%
\begin{corollary}\label{corollary:ACMH_convergence}
%-------------------------%
Suppose that
\beq\label{e:bounded1}
g_0(z)\geq\beta_0\pi(z)\;\;\text{for all}\;\; z\in E,
\eeq
for some $0<\beta_0<1$.
Then Theorems \ref{ergodic_theorem} and \ref{SLLN_theorem} hold for the ACMH sampler for any history $\H^n$.
\end{corollary}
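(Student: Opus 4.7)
\paradot{Proof sketch}

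The plan is to reduce the corollary to Theorems~\ref{ergodic_theorem} and~\ref{SLLN_theorem} by verifying the uniform minorization condition \eqref{assumption} for the main-chain transition kernel $p_n(x_{n-1},dx_n)$ at every iteration~$n$, appealing to the parts of Corollary~\ref{corollary} that are already in hand.

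First I would read off the dominating structure of the ACMH proposal directly from the expansion in~\eqref{e:proposal}. The first summand on the right-hand side of that display is $\delta\beta_0 g_0(z)$ and the remaining three summands are nonnegative, so
\beqn
q(z|x;\psihat(\calHn)) \;\geq\; \delta\beta_0\, g_0(z).
\eeqn
Combining this with hypothesis~\eqref{e:bounded1} yields
\beqn
q(z|x;\psihat(\calHn)) \;\geq\; \delta\beta_0^{2}\,\pi(z),
\eeqn
uniformly in $x$, in $z$, and in the adapted history $\calHn$. Setting $\beta:=\delta\beta_0^{2}\in(0,1)$, Corollary~\ref{corollary}(i) applies to the single-step ACMH kernel $p_{1,n}$ based on this proposal and gives $p_{1,n}(x,dz)\geq \beta\,\Pi(dz)$.

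Second, I would dispose of the two kinds of iterations in the sampler separately. On ordinary iterations $n\notin\{\iota_{RW},2\iota_{RW},\dots\}$, $p_n=p_{1,n}$, so the minorization is already in hand. On the periodic iterations $n\in\{\iota_{RW},2\iota_{RW},\dots\}$, $p_n=p_{1,n}p_{2,n}$, where $p_{2,n}$ is the adaptive random-walk \MH{} step --- a standard \MH{} kernel that has $\pi$ as its invariant distribution. Corollary~\ref{corollary}(iv) then transfers the minorization from $p_{1,n}$ to the composition with the same constant $\beta$. Consequently \eqref{assumption} holds uniformly in $n$, and Theorems~\ref{ergodic_theorem} and~\ref{SLLN_theorem} apply to the main chain for any history~$\calHn$.

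The only genuine content in the argument is the observation that $\delta\beta_0$ is a fixed lower bound on the mixture weight placed on the target-dominating component $g_0$ in the ACMH proposal. Everything else is bookkeeping through Corollary~\ref{corollary}, and the uniformity in the adapted history --- which is precisely what makes this an adaptive rather than a Markovian statement --- comes for free because the constants $\delta$ and $\beta_0$ do not depend on $\psihat(\calHn)$.
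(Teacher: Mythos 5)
Your proof is correct and follows essentially the same route as the paper: bound the ACMH proposal below by its $\delta\beta_0 g_0(z)$ component, use \eqref{e:bounded1} to obtain a minorization by $\pi$ that is uniform in $x$, $z$ and the history, and then pass through Corollary \ref{corollary} (parts (i) and (iv)) to cover both the ordinary iterations and the composite random-walk iterations. Your explicit constant $\delta\beta_0^{2}$ and the separate treatment of the composition step simply spell out what the paper asserts in one line before the corollary.
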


For a general target $\pi(z)$, $g_0(z)$ can be informally selected such that it is sufficiently heavy-tailed
to make \eqref{e:bounded1} hold.
In Bayesian inference, $\pi(z)$ is a posterior density that is proportional to $p(y|z)p(z)$ with $p(z)$ the prior
and $p(y|z)$ the likelihood.
Suppose that the likelihood is bounded; this is the case if the maximum
likelihood estimator exists.
If $p(z)$ is a proper density and we can generate from it, then we can set $g_0(z)=p(z)$ and it is straightforward to check that the condition \eqref{e:bounded1} holds.
The boundedness condition \eqref{e:bounded1} is satisfied
in all the examples in this paper.

We now
briefly discuss the cost of running the two chain algorithm as in our
article, compared to running a single chain adaptive algorithm. If the
target density is inexpensive to evaluate, then the cost of running the two
chain sampler is very similar to the cost of running just one chain because
the major cost is incurred in updating the proposal distribution. If it is
expensive to evaluate the target, then we can run the two chains in parallel
on two (or more) processors. This is straightforward to do in programs such
as Matlab because multiple processors are becoming increasingly common on
modern computers.

Section \ref{Sec:initialization} discusses the initial proposal $q_0$, the history vector $\mathcal{H}^0$ and $g_0$.

%-----------------------------------------------%
\subsubsection{Two-stage adaptation}\label{subsec:first stage}
%-----------------------------------------------%
We run the adaptive sampling scheme in two stages.
Adaptation in the first stage is carried out more intensively by
re-estimating the mixture of $t$ distributions after every 2000 iterations,
and then every 4000 iterations in the second stage.
When estimating the mixtures of $t$ in the first stage,
we let the Variational Approximation algorithm determine the number of components.
While in the second stage, we fix the number of components at that number in the mixture obtained after the first stage.
This makes the procedure faster and helps to stabilize the moves.
In addition, it is likely that the number of components is unchanged in this second stage.

%-----------------------------------------------%
\subsubsection{Selecting the control parameters}\label{ssub:corr}
%-----------------------------------------------%
When the mixture of $t$ approximation $g_M(z;\widehat\psi(\calHn))$ becomes closer to the target,
we expect the proposal $q(z|x; \psihat(\calHn))$ to be close to $g_M(z;\widehat\psi(\calHn))$.
We can do so by setting $\delta=\delta_n\to1$ as $n$ increases and setting a small value to $\beta_0$.
In our implementation we take $\beta_0=0.001$, $\gamma=0.2$ and a sequence $\delta_n$ as follows.
Let $N$ be the length of the chain we wish to generate
and suppose that $N = a_Nb_N$.
We set $\delta_{kb_N+j}=(k+1)/a_N$ for $k=0,...,a_N-1$ and $j=1,...,b_N$.
In our implementation we take $a_N = 10$.
For the correlation parameter $\rho$, we simply select the probability measure $\lambda(\rho)$ as the
$\text{Beta}(1,1)$ distribution.
These values were set after some experimentation.
However, it is likely that we can further improve the efficiency of the ACMH sampler with
a more careful (and possibly adaptive) choice of these control parameters.

%-----------------------------------------------%
\section{Initial exploration}\label{Sec:initialization}
%-----------------------------------------------%
The purpose of the ACMH sampler is to deal with non-standard and multimodal target distributions.
The sampler works more efficiently if the adaptive chain starts from
an initial mixture distribution that is able to roughly locate the modes.
We therefore attempt to initialize the history vector $\mathcal{H}^0$ by a few draws generated approximately from $\pi$
by an algorithm that can explore efficiently the whole support of the target,
and then estimate the initial mixture of $t$ based on these draws.
Our paper uses simulated annealing \citep{Neal:2001} to initialize the sampler.
An alternative is to use the Wang-Landau algorithm \citep{Wang:2001a,Wang:2001b}.
However, this algorithm requires the user to partition the parameter space appropriately
which is difficult to do in many applications.

\paradot{Simulated annealing}
Simulated annealing works by moving from an easily-generated distribution to the distribution
of interest through a sequence of bridging distributions.
Annealed sampling has proved useful in terms of efficiently exploring the support
of the target distribution \citep{Neal:2001}.
Let $\pi_0(x)$ be some easily-generated distribution, such as a $t$ distribution,
and $\psi_t,\ t=0,1,...,T$ a sequence of real numbers such that $0=\psi_0<...<\psi_T=1$.
A convenient choice is $\psi_t=t/T$. Let
\beqn
\eta_t(x)=\pi_0(x)^{1-\psi_t}\pi(x)^{\psi_t}.
\eeqn
Note that $\eta_0$ is the initial distribution $\pi_0$ and $\eta_T$ is the target $\pi$.
We sample from this sequence of distributions
using the sequential Monte Carlo method \citep[see, e.g.][]{DelMoral:2006,Chopin:2004}, as follows.
\begin{enumerate}
\item Generate $x_i\sim\eta_0(\cdot),\ i=1,...,N_p$, where $N_p$ is the number of particles.
\item For $t=1,...,T$
\begin{itemize}
\item[(i)] Reweighting: compute the weights
\beqn
\widetilde w_i=\frac{\eta_t(x_i)}{\eta_{t-1}(x_i)},\ \ \ w_i=\frac{\widetilde w_i}{\sum_{j=1}^{N_p}\widetilde w_j}.
\eeqn
\item[(ii)] Resampling: sample from $(x_i,w_i)_{i=1,...,N_p}$ using stratified sampling. Let $(\widetilde x_i)_{i=1,...,N_p}$ be the resampled particles.
\item[(iii)] Markov move: for $i=1,...,N_p$, generate $x_i^{(m)}\sim P_{\eta_t}(\cdot|x_i^{(m-1)})$, $m=1,...,M$, where $P_{\eta_t}(\cdot|\cdot)$ is a Markov kernel with invariant distribution $\eta_t$, $x_i^{(0)}=\widetilde x_i$.
$M$ is the burnin number.
\item[(iv)] Set $x_i\leftarrow x_i^{(M)},\ i=1,...,N_p$.
\end{itemize}
\end{enumerate}
The above sequential Monte Carlo algorithm produces particles $x_i$ that are approximately generated from the target $\pi$ \citep{DelMoral:2006}.
We can now initialize the history vector $\mathcal{H}^0$ using these particles and $q_0$ by the mixture of $t$ estimated from $\mathcal{H}^0$.
Typically, $T$ should take a large value for multimodal and high-dimensional targets.
In the default setting of the ACMH sampler, we set $T=10$, $N_p=500$ and $M=10$.
The initial distribution $\pi_0$ is a multivariate $t$ distribution with location $\mu_0=(0,...,0)'$, scale matrix $\Sigma_0=I_d$ and 3 degrees of freedom.
However, it is useful to estimate $\mu_0$ and $\Sigma_0$ from a short run of an adaptive random walk sampler,
and we follow this approach in the real data examples.

In the default setting of the ACMH sampler, we select the heavy-tailed component $g_0(z)$ as $q_0(z)$
except that all the degrees of freedom of the $t$ component of $q_0$ are set to 1,
so that the boundedness condition \eqref{e:bounded1} is likely to be satisfied.
However, in all the examples below, $g_0$ is context-specified to make sure that \eqref{e:bounded1} holds.

%======================================================================%
\section{Simulations}\label{Sec:simulations}
%======================================================================%
A common performance measure for an MCMC sampler
is the integrated autocorrelation time (IACT).
For simplicity, consider first the univariate case
and let $\{x_i,i=1,...,M\}$ be the generated iterates from the Markov chain.
Then the IACT is defined as
\beqn
\text{IACT} = 1+2\sum_{t=1}^\infty\rho_t,
\eeqn
where $\rho_t=\text{corr}(x_1,x_{t+1})$ is the autocorrelation of the chain at lag $t$.
Provided that the chain has converged,
the mean $\mu$ of the target distribution is estimated by $\bar x=\sum_i x_i/M$ whose variance is
\beqn
\text{Var}(\bar x)=\frac{\sigma^2}{M}\left(1+2\sum_{t=1}^{M-1}\Big(1-\frac tM\Big)\rho_t\right)\approx\frac{\sigma^2}{M}\Big(1+2\sum_{t=1}^\infty\rho_t\Big)=\text{IACT}\cdot\frac{\sigma^2}{M},
\eeqn
where $\sigma^2$ is the variance of the target distribution.
This shows that the IACT can be used as a measure of performance and that the smaller the IACT, the better the sampler.
Following \cite{Pitt:2012b}, we estimate the IACT by
\beqn
\widehat{\text{IACT}}=1+2\sum_{t=1}^{L^*}\widehat\rho_t,
\eeqn
where $\widehat\rho_t$ are the sample autocorrelations,
and $L^*=\min\{1000,L\}$,  with $L$ the first index $t$ such that $|\widehat\rho_t|\leq2/\sqrt{K_t}$
where $K_t$ is the sample size used to estimate $\widehat\rho_t$.
That is, $L$ is the lowest index after which the estimated autocorrelations are randomly scattered about 0.
When $d>1$  we take, for simplicity, the average IACT over the $d$ coordinates, or the maximum IACT.

Another performance measure is the squared jumping distance, \cite[see, e.g.,][ and the references in that paper]{Pasarica:2003}.
For the univariate case,
\begin{align*}
\text{Sq distance} & = \frac{1}{N-1}\sum_{i=1}^N |x_{i+1}-x_i|^2\approx 2\sigma^2(1-\rho_1).
\end{align*}
Therefore, the larger the squared distance the better.
When $d>1$, we take the average squared distance or the minimum squared distance over the $d$ coordinates.
We also report the acceptance rates in the examples below.

The IACT and squared distance are good performance measures when the target is unimodal.
If the target is multimodal, these measures may not be able to determine whether or not the chain has converged to the target, as discussed below.
We introduce another measure which suits the context of a simulation example
where a test data set $\mathcal D_T=\{x_s=(x_{s1},...,x_{sd})', s=1,...,S\}$ generated from the target $\pi$ is available.
Let $\hat f_i$ be the kernel density estimate of the $i$th marginal $\pi_i$ of the target.
The Kullback-Leibler divergence between $\pi_i$ and ${\hat f}_i$ is
\beqn
\int\log \left ( \frac{\pi_i(x_i)}{\hat f_i(x_i)}\right ) \pi_i(x_i)dx_i\approx C_i-\LPDS_i,
\eeqn
where $C_i=\int \pi_i(x_i)\log\;\pi_i(x_i)dx_i$ is independent of $\hat f_i$ and
\beqn
\LPDS_i = \frac1S\sum_{s=1}^S\log\hat f_i(x_{si})\approx \int\pi_i(x_i)\log\hat f_i(x_{i})dx_i
\eeqn
is the log predictive density score for the $i$th marginal.
Clearly, the bigger the $\LPDS_i$, the closer the estimate $\hat f_i$ to the true marginal $\pi_i$.
We define the log predictive density score over the $d$ marginals by
\beqn
\LPDS=\frac{1}{d}\sum_{i=1}^d\LPDS_i.
\eeqn
The bigger the \lpds{}, the better the MCMC sampler.

%------------------------------------------%
\subsection{Target distributions}\label{Exa:mixture}
%------------------------------------------%
The first target is a mixture of two multivariate skewed normal distributions
\beq\label{e:mixture target}
\pi_\text{msn}(x) = \sum_{k=1}^2\varphi_k\mathcal{SN}_d(x;\mu_k,\Sigma_k,\lambda_k),
\eeq
where $\mathcal{SN}_d(x;\mu,\Sigma,\lambda)$ denotes the density of a $d$-dimensional skewed normal distribution
with location vector $\mu$, scale matrix $\Sigma$ and shape vector $\lambda$.
See, e.g., \cite{Azzalini:1999}, for an introduction to multivariate skewed normal distribution.
We set $\mu_1=(-5,...,-5)'$, $\mu_2=(5,...,5)'$, $\Sigma_1=\Sigma_2=5(\sigma_{ij})_{i,j}$ with $\sigma_{ij}=(-0.5)^{|i-j|}$,
$\lambda_1=(-10,...,-10)'$, $\lambda_2=(10,...,10)'$
and $\varphi_1=0.6$, $\varphi_2=0.4$.
It is straightforward to sample directly and exactly from a skewed normal \citep{Azzalini:1999},
and therefore from $\pi_\text{msn}(x)$.
However, this is a non-trivial problem for MCMC simulation, especially in higher dimensions, because the target is multimodal
with an almost-zero probability valley between the two modes;
see the left panel of Figure \ref{fig:banana_mixture_plot} for a plot of $\pi_\text{msn}(x)$ when $d=2$.

Let $f_k(x)=\mathcal{SN}_d(x;\mu_k,\Sigma_k,\lambda_k)$.
By the properties of the multivariate skewed normal distribution \citep[see,][]{Azzalini:1999},
$f_k(x)\leq 2g_k(x)$, where $g_k(x)=N_d(x;\mu_k,\Sigma_k)$ is the density of the $d$-dimensional
normal distribution with mean $\mu_k$ and covariance matrix $\Sigma_k$.
The boundedness condition \eqref{e:bounded1} is satisfied by setting $g_0(z)=\varphi_1g_1(z)+\varphi_2g_2(z)$,
because then $\pi(z)/g_0(z)$ is bounded.

The second target density is the banana-shaped
distribution considered in \cite{Haario:1999}
\beq\label{e:banana}
\pi_b(x) = N_d(\phi_b(x);0,\Sigma),
\eeq
where $\Sigma = \diag(100,1,...,1)$, $\phi_b(x) = (x_1,x_2+bx_1^2-100b,x_3,...,x_d)$
and $b=0.03$.
See the right panel of Figure \ref{fig:banana_mixture_plot} for a plot of $\pi_b(x)$ when $d=2$.
As shown, the banana-shaped density has a highly non-standard support with very long and narrow tails.
It is challenging to sample from this target \citep{Haario:1999,Haario:2001,Roberts:2009}.

It can be shown after some algebra that the first marginal of $\pi_b(x)$
is $N(0,10^2)$ and for $i=3,...,d$ the marginals are independent $N(0,1)$.
It can be visually seen that the support of the second marginal is basically in the interval $(-50,50)$.
We therefore informally impose the condition \eqref{e:bounded1} by selecting $g_0(z)=t_d(z;0,\wt\Sigma,5)$,
a multivariate $t$ density with location 0, scale matrix $\wt\Sigma=\diag(100,100,1,...,1)$ and 5 degrees of freedom.
Typically, this ensures that the support of $g_0(z)$ covers the support of $\pi_b(z)$
and therefore the boundedness condition \eqref{e:bounded1} holds.

\begin{figure*}[h]
\centerline{\includegraphics[width=.8\textwidth,height=.5\textwidth]{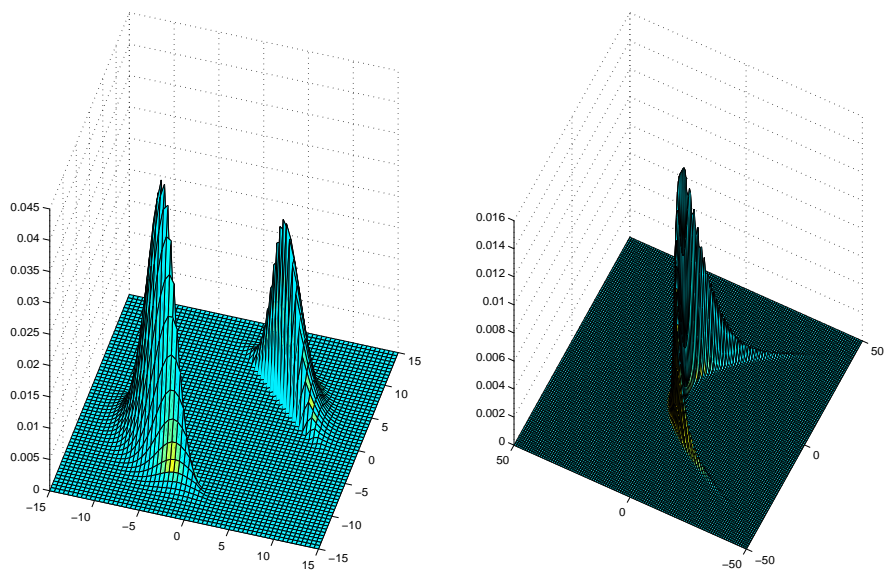}}
\caption{Plots of the probability density functions of the mixture example (left) and the banana-shaped example (right) for $d=2$.
\label{fig:banana_mixture_plot}
}
\end{figure*}

%------------------------------------------%
\subsection{Performance of the ACMH sampler}\label{subsec:perfomacce}
%------------------------------------------%
This section reports the performance of the ACMH sampler and compares it to
the adaptive random walk sampler (ARWMH) of \cite{Haario:2001}
and the adaptive independent \MH{} sampler (AIMH) of \cite{Giordani:2010}.
For all the samplers, we ran 50,000 iterations with another 50,000 for burnin.

%------------------------------------------%
\subsubsection{The usefulness of the adaptive random walk step}
%------------------------------------------%
We first demonstrate the importance of the adaptive random walk step by
comparing the performance of the ACMH sampler to a variant of it that does not perform the random walk step.
To make it easier to see the resulting estimates, we consider the target \eqref{e:mixture target} with $d=1$.
The left panel in Figure \ref{fig:RWimportance}
plots the kernel density estimates of the target estimated from the chains with and without the random walk step,
as well as the true density.
All the kernel density estimation reported in this paper
is done using the built-in Matlab function \texttt{ksdensity} with the default setting.
The right panel also plots the estimated kernel densities of
the first marginal when sampling from the banana-shaped target \eqref{e:banana}.
The first marginal of the banana-shaped target has very long tails (see Figure \ref{fig:banana_mixture_plot})
and it is challenging for adaptive MCMC samplers to efficiently explore the extremes of these tails.
The plots show that the chain with the random walk step explores the tail areas around the local modes more effectively.

We now formally justify the claim above using the censored likelihood scoring rule proposed in \cite{Diks:2011}.
This scoring rule is a performance measure for assessing the predictive accuracy of a density estimator $\wh f(x)$
over a specific region of interest, which is the tail area in our problem.
Let $A$ denote the region of interest, $\mathcal D$ a set of $n$ observations.
Then the censored likelihood score is defined as
\beq\label{e:score}
S(\wh f,\mathcal D)=\frac{1}{n}\sum_{x\in\mathcal D}\left(1_{x\in A}\log\wh f(x)+1_{x\in A^c}\log\int_{A^c}\wh f(z)dz\right),
\eeq
where $A^c$ is the complement of set $A$.
This scoring rule works similarly to the popular logarithmic scoring rule \citep{Good:1952};
in particular the bigger $S(\wh f,\mathcal D)$ is, the better the performance of $\wh f$.
However the censored likelihood score takes into account the predictive accuracy in a particular region of interest;
see \cite{Diks:2011} for a more detailed interpretation.

We consider the case of the mixture target $\pi_\text{msn}(x)$ with $d=1$
and are interested in how efficiently the ACMH samplers, with and without the random walk step, explore the left and right tails of $\pi_\text{msn}(x)$.
Let $\wh f_1(x)$ and  $\wh f_2(x)$ be the kernel densities estimated from the chains with and without the random walk step, respectively.
We compute the score \eqref{e:score} for $\wh f_1$ and $\wh f_2$ based on $n=5000$ independent draws from the target $\pi_\text{msn}(x)$,
in which the tail area is defined as $A=\{x\in\mathbb{R}: x<-15\;\text{  or  }\;x>15\}$.
We replicate the computation 10 times.
The scores averaged over the replications with respect to the ACMH samplers with and without the random walk step
are 0.98 and 0.96 respectively.
This result formally justifies the claim that the random walk step helps the sampler to explore the tail area more effectively.

We also ran long chains with 200,000 iterations after discarding another 200,000 for burnin,
then the difference between the censored likelihood scores of the ACMH samplers with and without the random walk step
is 0.0008. That is, the difference decreases when the number of iterations increases.
This result suggests that the ACMH sampler without the random walk step is able to explore the tail area effectively
if it is run long enough.

\begin{figure*}[h]
\centerline{\includegraphics[width=.9\textwidth,height=.5\textwidth]{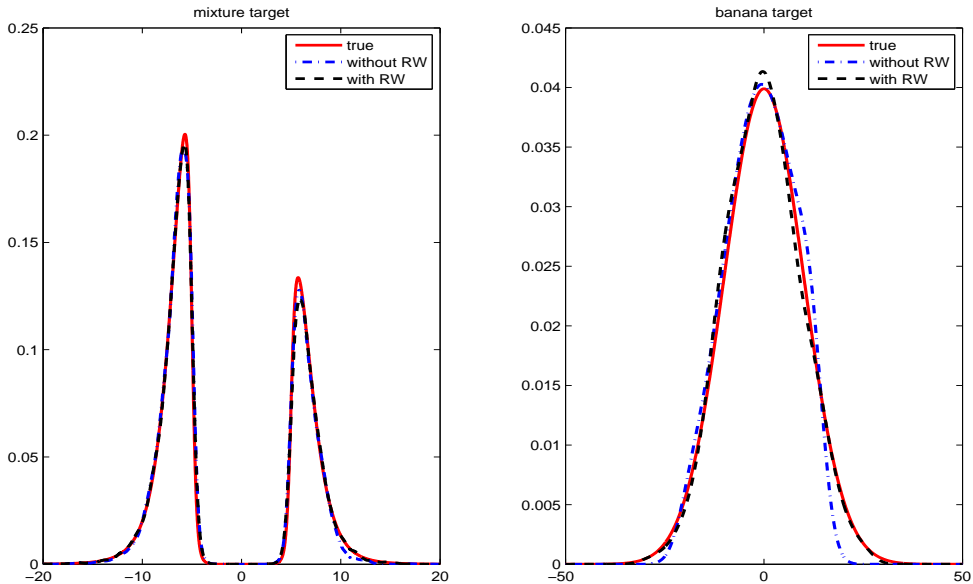}}
\caption{\label{fig:RWimportance}
The left panel plots the kernel density estimates of the mixture target
obtained from the ACMH chains with and without the random walk step, as well as the true density.
Similarly, the right panel plots the densities with respect to the first marginal of the banana-shaped target.
}
\end{figure*}

%------------------------------------------%
\subsubsection{The usefulness of the component-wise sampling step}
%------------------------------------------%
To illustrate the effect of the component-wise sampling step,
we sample from the banana-shaped target using the ACMH samplers with and without this step.
The coordinates $x_B$ that are kept unchanged, and therefore the size $d_B$, are selected randomly as in Section \ref{Sec:blocking}.
Table \ref{t:blocking} summarizes the performance measures for these two samplers averaged over 10 replications.
The result shows that in general the sampler that performs the component-wise sampling step outperforms
the one that does not.

\begin{table}[h]
\centering
\vskip2mm
{\small
\begin{tabular}{ccccc}
\hline\hline
$d$	&Algorithm		&Acceptance rate (\%)&IACT	&Sq distance\\
\hline
10	&Without component-wise sampling	&38		&38.61	&2.63\\
	&With component-wise sampling	    	&56		&19.45	&5.92\\
\hline
20	&Without component-wise sampling	&32		&57.11	&0.71\\
	&	With component-wise sampling    	&34		&47.33	&1.52\\
\hline
40	& 	Without component-wise sampling&14		&171.4	&0.18\\
	&  With component-wise sampling   	&26		&80.17	&0.83\\
\hline\hline
\end{tabular}
}
\caption{Importance of component-wise sampling.
}\label{t:blocking}
\end{table}

\begin{table}[h]
\centering
\vskip2mm
{\small
\begin{tabular}{ccccc}
\hline\hline
$d$	&Algorithm		&Acceptance rate (\%)&IACT	&Sq distance\\
\hline
10	&Without CMH		&35		&26.71	&2.49\\
	&With CMH	    	&56		&19.45	&5.92\\
\hline
20	&Without CMH		&26		&62.11	&0.88\\
	&With CMH	    	&34		&47.33	&1.52\\
\hline
40	&Without CMH		&0.1		&180.2	&0.01\\
	&With CMH	    	&26		&80.17	&0.83\\
\hline\hline
\end{tabular}
}
\caption{Importance of the correlated \MH{} step.
}\label{t:CMH}
\end{table}

%------------------------------------------%
\subsubsection{The usefulness of the reversible \MH{} step}
%------------------------------------------%
We demonstrate the importance of the reversible step
by comparing the ACMH sampler with a version of it in which the $\delta$ parameter in Section 4.4 is set to one,
i.e. only the independent \MH{} step is performed.
Table \ref{t:CMH} summarizes the performance measures for these two samplers averaged over 10 replications.
The results suggest that the correlated step helps improve significantly on the performance of the ACMH sampler.

%------------------------------------------%
\subsubsection{Comparison of the ACMH sampler to other adaptive samplers}\label{SS: comparison}
%------------------------------------------%
We now compare the ACMH sampler to the ARWMH and AIMH samplers
for the mixture of skewed normals and banana-shaped targets.

\paradot{The mixture target}
Figure \ref{fig:mixtureexample1} plots the chains with respect to the first marginal for three cases: $d=2$, $d=5$ and $d=10$.
The ARWMH never converges to the target even when $d=1$.
It looks as if the ARWMH has converged
but in fact it is always stuck at a local mode because the target has two modes that are almost separate.
In such cases, the performance of ARWMH may be mistakenly considered to be good in terms of IACT and squared jumping distance,
while its \lpds{} will be large because the estimated density is far from the true density.
This justifies the introduction of the \lpds{} as a performance measure.
The AIMH sampler works well when $d$ is as small as 3 in this hard example.
As expected with samplers based on independent \MH{} steps only,
it is almost impossible for the AIMH to move the chain when $d$ is large.
Figure \ref{fig:mixtureexample1} shows that the ACMH sampler converges best.

\begin{table}[h]
\centering
\vskip2mm
{\small
\begin{tabular}{cccccc}
\hline\hline
$d$	&Algorithm	&Acceptance rate (\%)&IACT	&Sq distance	&LPDS\\
\hline
2	&ARWMH		&36		&9.15	&1.73		&-73.1\\
	&AIMH	    	&14		&13.6	&10.5		&-2.81\\
	&ACMH	    	&70		&3.97	&37.4		&-2.80\\
\hline
5	&ARWMH 		&32		&15.2 	&1.13		&-42.5\\
	&AIMH		&11		&22.2	&6.12		&-2.89\\
	&ACMH		&68		&3.91	&34.3		&-2.84\\
\hline
10	&ARWMH		&29		&29.5	&0.63		&-43.3\\
	&AIMH		&0.01		&1885	&0.006		&-20.8\\
	&ACMH		&68		&9.40	&40.27		&-2.86\\
\hline
\hline
\end{tabular}
}
\caption{
Mixture of skewed normals target:
The table reports the acceptance rates, autocorrelation times,
squared distances and log predictive density scores for the three
adaptive sampling schemes ARWMH, AIMH, and ACMH and three
values of the dimension $d$.
The values are averaged over 5 replications.
}\label{t:mixture table}
\end{table}

\begin{figure*}[h]
\centerline{\includegraphics[width=.9\textwidth,height=.5\textwidth]{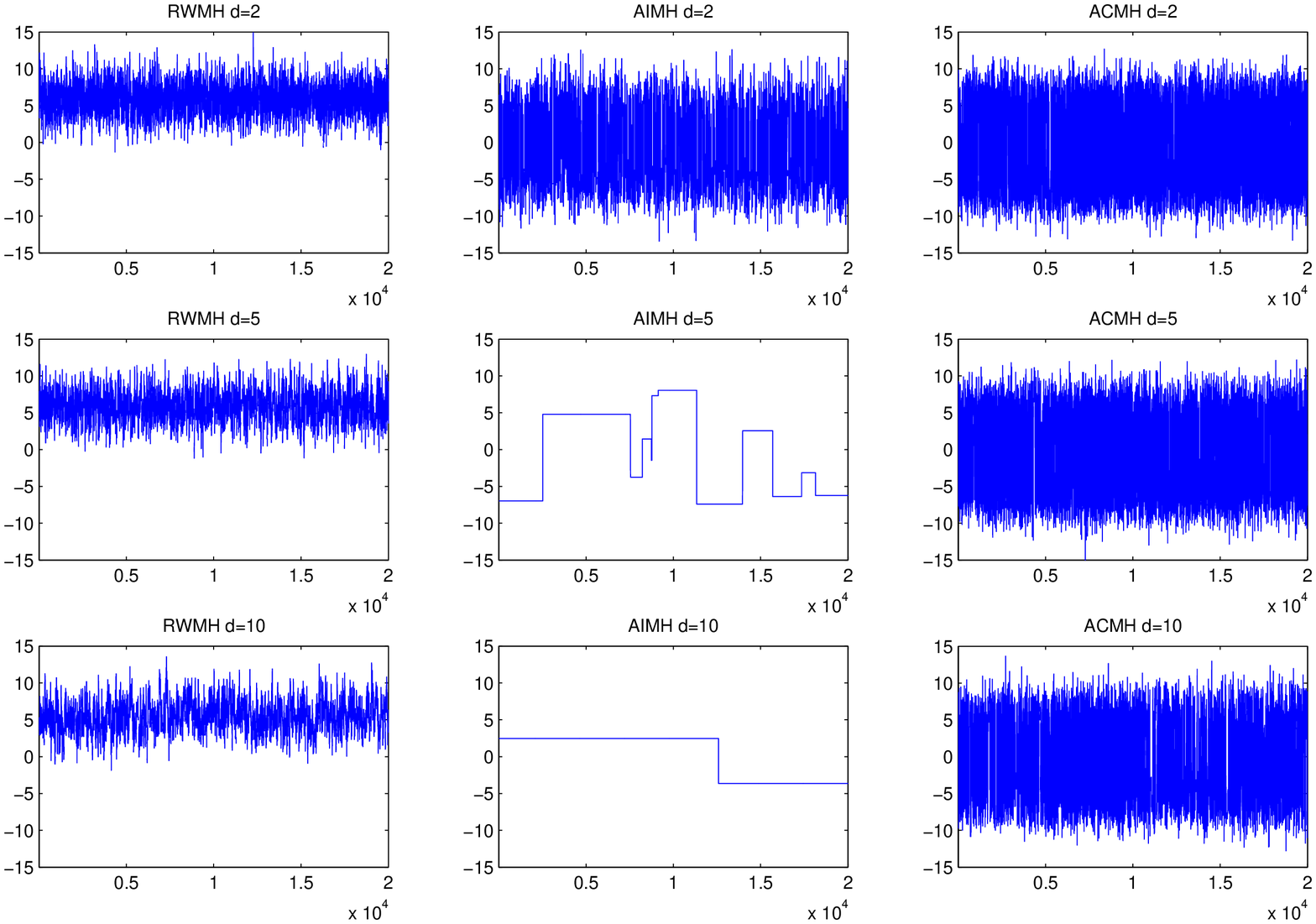}}
\caption{
Mixture of skewed normals target:
Plots of the iterates of the three adaptive
sampling schemes for the first marginal. Columns 1 to 3 correspond
to the ARWMH, AIMH and ACMH sampling schemes. Rows
1 to 3 correspond to the dimensions $d = 2$, 5 and 10.
\label{fig:mixtureexample1}
}
\end{figure*}

We now compare the performance of the three adaptive samplers more formally over 5 replications.
The acceptance rates, IACT, squared distance values and \lpds{s} are computed and averaged over the $d$ marginals and 5 replications.
Table \ref{t:mixture table} summarizes the result.
The ACMH sampler compares favorably with the other two samplers.
It looks as if the ARWMH chain has converged and performs well in terms of IACT and squared distance,
but in fact the chain is trapped in a local mode.
We conclude from these results that if the target is multimodal,
performance measures based on IACT and squared distance may be misleading.

\paradot{Banana-shaped target}
Because this target is unimodal, the IACT and squared distance can be used as performance measures.
Note that we do not use the log predictive
score in this example because it is difficult to obtain
an independent test data set that is generated exactly from the target \eqref{e:banana}.
Table \ref{t:banana table} summarizes the results in terms of percent acceptance rates, IACT, Sq distance and CPU time
in seconds, where the CPU time is the average CPU time over the ten replications. The results are based on 50000 iterations with another 50000 iterations used for burin.
The table also reports two other performance measures that are used by \cite{schmidl:2013}.
The first is II/time = Number of iterates/(IACT $\times$ CPU time), which
is an estimate of the number of independent iterates per unit time, which in this case is seconds.
The second measure Acc/IACT = 1000 $\times $ Acceptance rate/IACT. The table shows that
the ACMH sampler outperforms the other two in terms of Acceptance rate, IACT, Sq distance, Acc/IACT. However, it is worse than the other two samplers in terms of II/time as it takes longer to run.
The code was written in Matlab and run on an Intel Core i7 3.2GHz desktop  running on a single
processor. The time taken by the ACMH sampler can be reduced appreciably by taking the following steps.
(i)~First, by profiling the code we find that
a major part of the time to run the sampler is taken by the variational approximation procedure
used to obtain the mixture of $t$ proposal.
The running time can be shortened appreciably by write the variational approximation part of the code in C or Fortran and using mex files.
(ii)~Second, the running time can also be shortened by running the two chains on separate processors in parallel and also by using parallel
processing for the independent draws.

\begin{table}[!h]
\centering
\vskip2mm
{\small
\begin{tabular}{ccccccccc}
\hline\hline
$d$	&Algorithm	&Acceptance rate (\%)	&IACT	&Sq distance	&CPU time 	& II/time& Acc/IACT\\
\hline
5	&ARWMH		&14			&81.83	&1.23		&21 		&29.1 	 &171 \\
	&AIMH		&20			&44.52	&5.80		&14 		&80.2    &449    \\
	&ACMH		&64			&24.33	&17.0		&253  		&8.12	 &2631\\
\hline
10	&ARWMH		&15			&150.1	&0.41		&22  		&15.1	 &100 \\
	&AIMH		&31			&49.65	&3.06		&15 		&67.1    &624 \\
	&ACMH		&56			&19.45	&5.92		&250  		&10.3 	 &2879\\
\hline
20	&ARWMH		&18			&168.8	&0.15		&26  		&11.4 	 &107 \\
	&AIMH		&10			&174.6	&0.33		&16 		&17.9  	 &57  \\
	&ACMH		&34			&47.33	&1.52		&368 		&2.87    &718 \\
\hline
40	&ARWMH		&24			&208	&0.05		&40  		&6.01    &115\\
	&AIMH		&0			&1991	&0		&16  		&1.57    &0 \\
	&ACMH		&26			&80.17	&0.83		&395  		&1.58    &324\\
\hline
\hline
\end{tabular}
}
\caption{Banana-shaped example: performance measures averaged over 10 replications.
II/sec = Number of iterates/(IACT $\times$ CPU time) and Acc/IACT = 1000 $\times $ Acceptance rate/IACT.
}\label{t:banana table}
\end{table}

%======================================================================%
\section{Applications}\label{Sec:applications}
%======================================================================%
\subsection{Covariance matrix estimation for financial data}
This section applies the ACMH sampler to estimate the covariance matrix
of ten monthly U.S. industry portfolios returns.
The data is taken from the Ken French data library
and consists of $N=990$ observations $\v y=\{y_i,i=1,...,N\}$ from July 1925 to December 2008.
We use the following 10 industry portfolios: consumer non-durable, consumer durable,
manufacturing, energy, business equipment, telephone and television transmission, shops,
health, utilities, and others.

We assume that the $y_i,\ i=1, \dots, N,$ are independently distributed as
$N_p(y_i;0, \Sigma) $ with $p=10$.
We are interested in Bayesian inference of $\Sigma$.
\cite{Yang:1994} propose the following reference prior for $\Sigma$,
\beqn
p_\text{ref}(\Sigma)\propto\frac{1}{|\Sigma|\prod_{i<j}(r_i-r_j)},
\eeqn
where $r_1\geq r_2\geq...\geq r_p$ are the eigenvalues of $\Sigma$.
This reference prior puts more mass on covariance matrices having eigenvalues that are close to each other.
That is, it shrinks the eigenvalues in order to produce a covariance matrix estimator
with a better {\em condition number} defined as the ratio between the largest and smallest eigenvalues \citep[see, e.g.,][]{Belsley:1980}.
In order to formally impose the boundedness condition \eqref{e:bounded1},
we modify the reference prior and use
\beqn
p_{\text{ref},\epsilon}(\Sigma)\propto\frac{1}{|\Sigma|\prod_{i<j}(r_i-r_j)}1_{A_\epsilon}(\Sigma),
\eeqn
where $A_\epsilon$ is the set of symmetric and positive definite matrices $\Sigma$
such that $\min_{i,j}|r_i-r_j|>\epsilon$. We take $\epsilon=10^{-6}$ in the implementation.
Then, the posterior distribution of $\Sigma$ is
\beqn
p_\epsilon (\Sigma|\v y)\propto\frac{\exp\Big\{-\frac12\tr\big(\Sigma^{-1}S\big)\Big\}}{|\Sigma|^{N/2+1}\prod_{i<j}(r_i-r_j)}1_{A_\epsilon}(\Sigma),
\eeqn
where $S=\sum_{i=1}^N(y_i-\bar{y})(y_i-\bar{y})'$ with $\bar{y}$ the sample mean.
To impose the boundedness condition, we select $g_0(\Sigma)$ to be the inverse Wishart distribution with scale matrix $S$ and degrees of freedom $N-p+1$,
\beqn
g_0(\Sigma)\propto\frac1{|\Sigma|^{N/2+1}}\exp\Big\{-\frac12\tr\big(\Sigma^{-1}S\big)\Big\}.
\eeqn
It is then straightforward to check that there exists a constant $\beta_0$ such that $g_0(\Sigma)\geq \beta_0p_\epsilon(\Sigma|\v y)$.

Because of the positive-definite constraint on $\Sigma$,
it is useful to transform from the space of positive-definite matrices
to an  unconstrained space using the one-to-one transformation
$\Sigma^*=\log(\Sigma)$ or $\Sigma=\exp(\Sigma^*)$,
where $\Sigma^*$ is a symmetric matrix \citep{Leonard:1992}.
We can generate $\Sigma$ by generating the unconstrained lower triangle of $\Sigma^*$.
Let $\Sigma^*=QR^*Q'$ where $R^*=\diag(r_1^*,...,r_p^*)$ with $r_1^*\geq r_2^*\geq...\geq r_p^*$
and $Q$ the orthogonal matrix.
Then $\Sigma = Q\diag(e^{r_1^*},...,e^{r_p^*})Q'$.
We are now working on an unconstrained space of $\Sigma^*$,
we therefore can fit multivariate mixtures of $t$ to the iterates.
The dimension of the parameter space is $d=p(p+1)/2=55$.
Note that, in order to  be able to generate $\Sigma^*$ from $g_0(\Sigma)$,
we first generate $\Sigma$ from $g_0(\Sigma)$ and then transform it to $\Sigma^*$ as follows.
Let $\Sigma=Q\Lambda Q'$ where $\Lambda=\diag(\lambda_1,...,\lambda_p)$ with $\lambda_1\geq...\geq\lambda_p>0$,
then $\Sigma^* = Q\diag(\log(\lambda_1),...,\log(\lambda_p))Q'$.

Each sampler was run for 500,000 iterations after discarding the first 500,000 burnin iterations.
To reduce the computing time for the ACMH sampler in such a long run, we stop updating the mixture of $t$ distribution after the burnin period.
Figure \ref{fig:covariance} plots the iterates from the three chains for the first and second marginals as well as the first 500 autocorrelations of
these iterates.
For the ARWMH sampler, mixing is very poor and the chain moves very slowly.
The AIMH and ACMH samplers mix better.
Table \ref{t:real table table} summarizes the results for the three samplers, and reports
the acceptance rates, the  IACT values and squared distances (both averaged over the marginals),
the maximum IACT (max IACT) and minimum squared distance (min Sq distance) (among the 55 marginal chains),
and the CPU times taken by each sampler. It also reports II/time = Number of iterations/ (IACT $\times $ CPU time),
min II/time = Number of iterations/ (max IACT $\times $ CPU time), Acc/IACT = $\times$ Acceptance rate / IACT
and min Acc/IACT = $\times$ Acceptance rate / (max IACT). In this example the ACMH sampler outperforms
the other two samplers on almost all the performance measures.

\begin{table}[h]
\centering
\vskip2mm
{\small
\begin{tabular}{lcccccc}
\hline\hline
&\multicolumn{3}{c}{Covariance Example}&\multicolumn{3}{c}{Spam Example}\\
			&ARWMH	&AIMH	&ACMH	&ARWMH	&AIMH	&ACMH\\
\hline
Acceptance rate (\%) 	&7.7	&27	&30	&20	&1.2	&14\\
Avg IACT		&476	&148	&28	&288	&333	&156\\
Max IACT		&574	&283	&42	&484	&411	&344\\
CPU Time (mins)		&21.9	&13.7	&36.5	&10.9	&5.9	&19.2\\
Avg Sq Dist ($\times10^4$)&0.2	&12	&17	&2121	&698	&7379\\
Min Sq Dist ($\times10^4$)&0.07	&8.6	&12	&1.5	&1.1	&13\\
Avg II/time 		&48	&246 	&489	&64	&102	&67 \\
Min II/time		&40	&129	&326	&40	&82	&30 \\
Avg Acc/IACT (times $10^3$)&16	&182	&1071	&69	&4	&90 \\
Min Acc/IACT (times $10^3$)&13	&95	&714	&41	&3	&41 \\
\hline\hline
\end{tabular}
}
\caption{Real data examples. Average II/time = Number of iterations/ (Average IACT $\times $ CPU time),
Min II/time = Number of iterations/ (Max IACT $\times $ CPU time),
Avg Acc/IACT = $1000 \times$ Acceptance rate / (Avg IACT), and
Min  Acc/IACT = $1000 \times$ Acceptance rate / (Max IACT).
}\label{t:real table table}
\end{table}

\begin{figure*}[h]
\centerline{\includegraphics[width=1\textwidth,height=.6\textwidth]{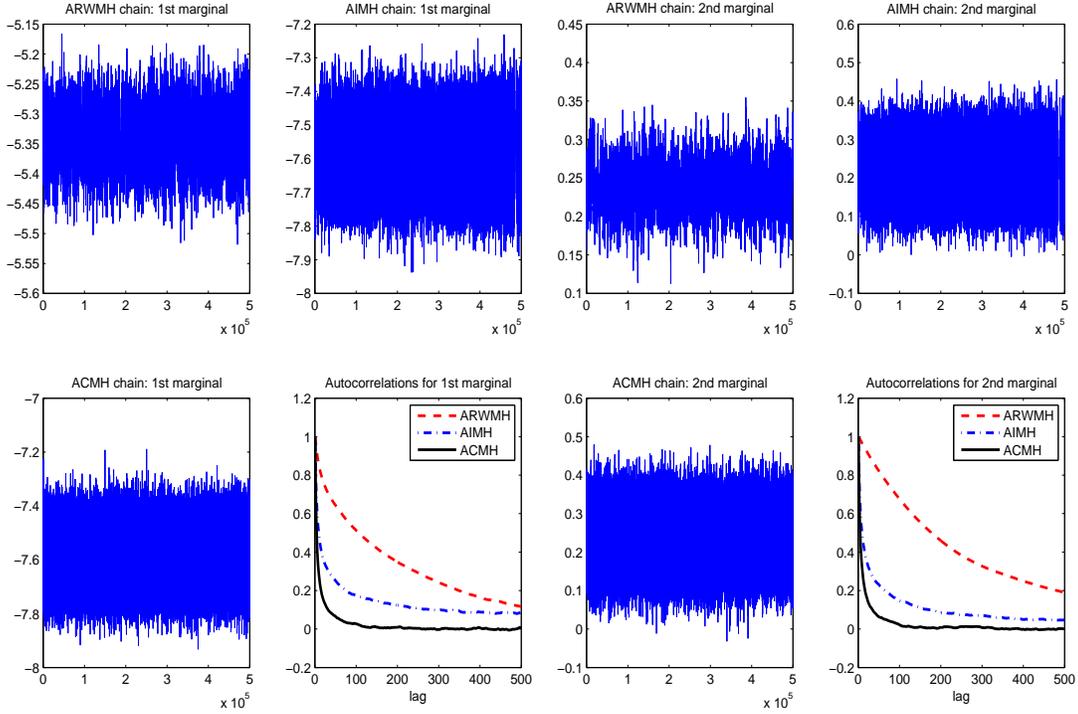}}
\caption{Covariance example: Plots of the iterates of the 1st and 2nd marginals and the first 500 autocorrelations of these iterates for the three adaptive samplers.
\label{fig:covariance}
}
\end{figure*}

%----------------------------------------%
\subsection{Spam filtering}
%----------------------------------------%
Automatic spam filtering is an important function for any email service provider.
The researchers at the Hewlett-Packard Labs created a spam email data set consisting of
4061 messages, each of which has been already been classified as an email or a spam
together with 57 attributes (predictors) which are relative frequencies of commonly occurring words.
The goal is to design a spam filter that can filter out spam before clogging the user's mailbox.
The data set is available at \texttt{http://www-stat.stanford.edu/$_{^\sim}$tibs/ElemStatLearn/}.

A suitable statistical model for this goal is logistic regression, where
the probability of being a spam, given the predictor vector $x$, is modeled as
\beqn
\mu(x,\theta) = P(y=1|x,\beta_0,\beta)=\frac{\exp(\beta_0+x'\beta)}{1+\exp(\beta_0+x'\beta)},
\eeqn
with $\theta=(\beta_0,\beta)'$ the coefficient vector.
For a future message with attributes $x$, using Bayesian inference, our goal is to estimate the posterior probability
that the message is classified as a spam,
\beq\label{e:mux}
\mu(x) = \E_{\t|D}(\mu(x,\theta))=\int \mu(x,\theta)p(\t|D)d\t,
\eeq
where $p(\t|D)$ denotes the posterior distribution of $\t$ given a training data set $D$.

We employ the weakly informative prior for $\theta$ proposed in \cite{Gelman:2008}.
The prior is constructed by first standardizing the predictors to have mean zero and standard deviation 0.5,
and then putting independent Cauchy distributions on the coefficients.
As a default choice, \cite{Gelman:2008} recommended a central Cauchy distribution with scale 10 for the intercept $\beta_0$
and central Cauchy distributions with scale 2.5 for the other coefficients.
This is a regularization prior and \cite{Gelman:2008} argue that it has many advantages;
in particular, it works automatically  without the need to elicit hyperparameters.
We set $g_0$ to be the  prior, which ensures that the boundedness condition \eqref{e:bounded1} is satisfied
because the maximum likelihood estimator for logistic regression exists.

We first use the whole data set and run each of the three samplers for 200,000 iterations after discarding 200,000 burnin iterations.
The dimension in this example is $d=58$.
Figure \ref{fig:spam} plots the iterates from the three chains for the first and second marginals as well as the first 500 autocorrelations of
these iterates.
As in the covariance estimation example,
Table \ref{t:real table table} summarizes the results for the three samplers.
The ACMH sampler outperforms the other two samplers except for the Avg II/time and Min II/time, where the results are mixed, because
of the longer running times. However, as noted at the end of Section~\ref{SS: comparison}, it is straightforward
to make the ACMH run faster.

\begin{figure*}[h]
\centerline{\includegraphics[width=1\textwidth,height=.6\textwidth]{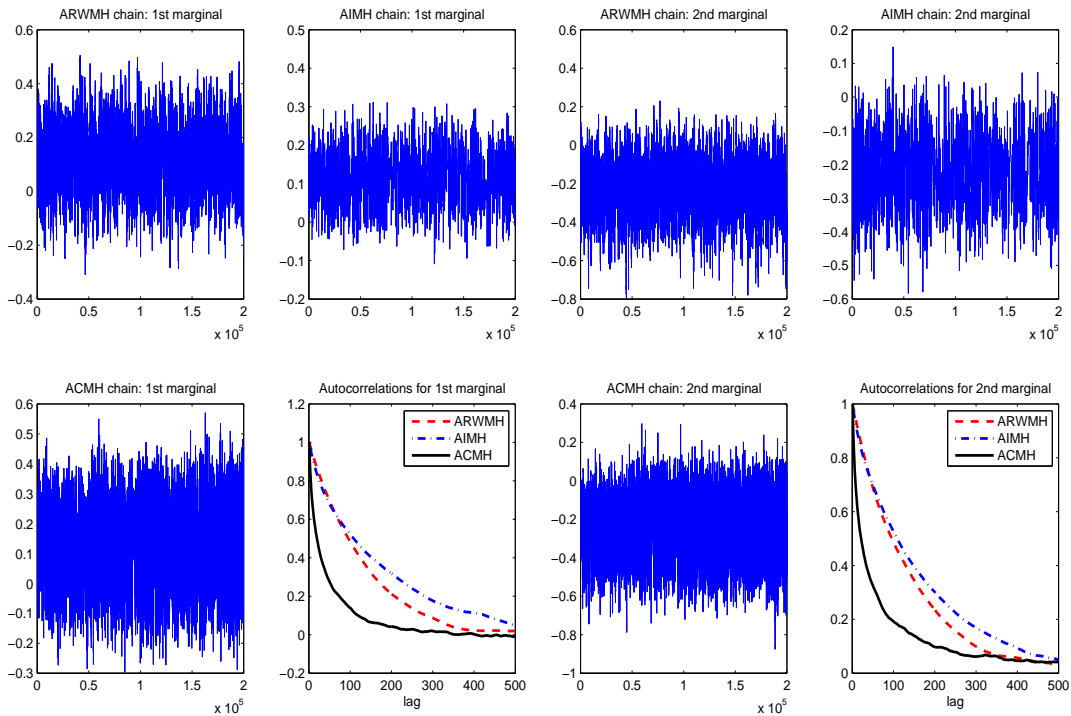}}
\caption{\label{fig:spam}
Spam email example: Plots of the iterations of the 1st and 2nd marginals and the first 500 autocorrelations of these iterations for the three adaptive samplers.
}
\end{figure*}

We also consider the predictive ability of the binary models estimated by the three chains.
The continuous ranked probability score (CRPS) is a widely used prediction measure in the forecasting community, see, e.g., \cite{Gneiting:2007} and \cite{Hersbach:2000}.
Let $F$ be the cumulative distribution function (cdf) of the predictive distribution in use
and $y$ be an actual observation. The CRPS is defined as
\beqn
\text{CRPS}(F|y) = \int_\mathbb{R}(F(z)-\I_{z\geq y})^2dz.
\eeqn
When $F_\mu$ is the cdf of a Bernoulli variable $Y$ with probability of success $\mu=P(Y=1)$,
the CRPS is given by
\beqn
\text{CRPS}(F_\mu|y=0) = \mu^2\;\;\text{and}\;\;\text{CRPS}(F_\mu|y=1) = (1-\mu)^2
\eeqn
with $\mu(x)$ given in \eqref{e:mux}.
Let $D^T$ be a test data set, we compute the CRPS (based on $D^T$) by
\beq\label{CRPSdef}
\text{CRPS}= \sum_{(x,y)\in D^T}\text{CRPS}(F_{\mu(x)},y).
\eeq
Under this formulation, it is understood that smaller CRPS means better predictive performance.

We now randomly partition the full data set into two roughly equal parts: one is used as the training set $D$ and the other as the test set $D^T$.
We would like to assess the performance of the samplers in terms of predictive accuracy.
To do so, we compute the CRPS for each sampler as in \eqref{CRPSdef}.
To take into account the randomness of the partition, we average the CRPS values over 5 such random partitions.
We first run each sampler for 50,000 iterations with another 50,000 burnin iterations.
The averaged CRPS values for the ARWMH, AIMH and ACMH are 135.23, 135.10, 133.14 respectively.
This result suggests that the ACMH has the best predictive accuracy for that number of iterations.
We then carried out longer runs for each sampler with 500,000 iterations after discarding 500,000 burnin iterations.
The averaged CRPS values for the ARWMH, AIMH and ACMH are now 123, 123.1, 122.8 respectively.
This means that all the samplers converge to the target if they are run long enough.

%======================================================================%
\section{Discussion}\label{Sec:conclusion}
%======================================================================%
This article develops a general-purpose adaptive correlated \MH{} sampler that will work well for multimodal as well as unimodal targets.
Its main features are the use of reversible proposals, the absence of a requirement for
diminishing adaptation, and having as its major component the mixture of $t$ model fitted by
Variational Approximation. The ACMH sampler combines exploratory and exploitative stages
and consists of various steps including correlated, random walk and Metropolis within Gibbs component-wise sampling
steps. This makes the sampler explore the target more effectively both globally and locally,
and improves the acceptance rate in high dimensional problems.
The convergence to the target is theoretically guaranteed without the need to impose
Diminishing Adaptation.

There are two important and immediate extensions of our work.
First, the ACMH sampler can be extended in a straightforward way to allow for a reversible proposal
whose invariant distribution is a mixture of $t$ that is constructed in a different way to the Variational
Approximation approach, e.g. as in \cite{Hoogerheide2012}. More generally, the ACMH sampler can be extended
to a reversible proposal whose invariant distribution is a more general mixture, e.g. a mixture of multivariate
betas or a mixture of Wishart densities, or a mixture of copula densities.
Second, the current article considers applications where the likelihood can
be evaluated explicitly (up to a normalizing constant). However, our methods apply equally well
to problems where the likelihood can only be estimated unbiasedly and adaptive MCMC is carried out
using the unbiased  estimate instead of the likelihood as in, for
example, \cite{andrieu:doucet:holenstein:2010} and \cite{Pitt:2012b}.

%======================================================================%
\section*{Acknowledgment}
%======================================================================%
The research of Minh-Ngoc Tran and Robert Kohn was partially supported
by Australian Research Council grant DP0667069. 
\bibliographystyle{apalike}
\bibliography{references}

%======================================================================%
\section*{Appendix}
%======================================================================%

%-----------------------------------------------------%
\begin{proof}[Proof of Theorem \ref{ergodic_theorem}]
%-----------------------------------------------------%
The proof is based on the split chain construction of \cite{Athreya1978}.
We can write
\beqn
p_{i}(x_{i-1},dx_{i})  = \beta\Pi(dx_{i})+(1-\beta\ )\nu_{i}(x_{i-1},dx_{i}),
\eeqn
where, by \eqref{assumption},
\beqn
\nu_i(x_{i-1},dx_i) = \frac{p_i(x_{i-1},dx_i)-\beta\Pi(dx_i)}{1-\beta}
\eeqn
is a transition distribution with invariant distribution $\Pi(\cdot) $.
By induction, it is easy to show that
\bqan
\P^{n}(x_{0},dx_{n}) &=&\int\P^{n-1}(x_0,dx_{n-1})p_n(x_{n-1},dx_n)\\
&=&(1-\beta\ )^{n}\nu^{n}(x_{0},dx_{n})+\left(1-(1-\beta\ )^{n}\right)  \Pi(dx_{n}),\;\;\text{for}\;\;n\geq1,
\eqan
where
\beqn
\nu^{n}(x_{0},dx_{n})  ={\int_{x_{1}}}\cdots
{\int_{x_{n-1}}}\nu_{1}(x_{0},dx_{1})\cdots\nu_{n}(x_{n-1},dx_{n}).
\eeqn
So
\beq\label{main_equality}
\P^{n}(x_{0},dx_{n})-\Pi(dx_{n})   =(1-\beta\ )^{n}(\nu^{n}(x_{0},dx_{n})-\Pi(dx_{n})).
\eeq
This implies that for any set $A\in\mathcal{E}$,
\beqn
|\P^{{n}}(x_{0},A)-\Pi(A)|=\left|\int_A(\P^{{n}}(x_{0},dx_n)-\Pi(dx_n))\right|\leq2(1-\beta)^n.
\eeqn
It follows that
\beqn
\|\P^{{n}}(x_{0},\cdot)-\Pi(\cdot)\|_{TV}\leq2(1-\beta)^{n},
\eeqn
for any initial $x_0$.
\end{proof}

We introduce some notation.
Denote by $\mu_0(\cdot)$ the distribution of the initial $x_0$.
For $j>i$, define,
\bqan
\P^{j|i}(x_{i},dx_{j})  &  =&\int_{x_{i+1}}\cdots\int_{x_{j-1}}p_{i+1}%
(x_{i},dx_{i+1})\cdots p_{j}(x_{j-1},dx_{j}).\\
\E_{\mu_{0}\P^{i}}(h)  &  =&\mu_{0}\P^{i}(h)=\int_{x_{0}}\int_{x_{i}}\mu_{0}%
(dx_{0})\P^{i}(x_{0},dx_{i})h(x_{i}).\\
\E_{\mu_{0}\P^{i}\P^{j|i}}(h\circ h)  &  =&\mu_{0}\P^{i}\P^{j|i}(h\circ h)=\int_{x_{0}}\int%
_{x_{i}}\int_{x_{j}}\mu_{0}(dx_{0})\P^{i}(x_{0},dx_{i})\P^{j|i}(x_{i}%
,dx_{j})h(x_{i})h(x_{j}).\\
\E_{\Pi}(h)  &  =&\Pi(h)=\int_{x}\Pi(dx)h(x).
\eqan
%-----------------------------------------------------%
\begin{lemma}\label{lemma_varSn}
%-----------------------------------------------------%
Suppose that the assumptions in Theorems~\ref{ergodic_theorem} and \ref{SLLN_theorem} hold. Then,
\begin{itemize}
\item[(i)]
\[
\frac{1}{n}\sum_{i=1}^{n}\E_{\mu_{0}\P^{i}}(h)=\E_{\Pi}(h)+O\left (\frac1n\right ).
\]
\item[(ii)]
\beqn
\frac{1}{n^{2}}{\sum_{i=1}^{n}}\E_{\mu_0\P^i}(h^2)= \E_\Pi(h^2) + O\left (\frac1n  \right ).
\eeqn
\item[(iii)]
\[
\frac{1}{n^{2}}\sum_{j=1}^{n}\sum_{i=1}^{n}\E_{\mu_{0}\P^{i}}(h)\E_{\mu_{0}\P^{j}%
}(h)=\E_{\Pi}(h)^2+  O\left (\frac1n  \right ).
\]
\item[(iv)]
\beqn
\frac{1}{n^{2}}{\sum_{j=2}^{n}}{\sum_{i=1}^{j-1}}  \E_{\mu_{0}\P^{i}\P^{j|i}}(h\circ h)= \E_\Pi(h)^2 +O\left (\frac1n  \right ).
\eeqn
\end{itemize}
\end{lemma}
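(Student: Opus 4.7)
The plan is to reduce every part of the lemma to a single geometric bound that already appears inside the proof of Theorem~\ref{ergodic_theorem}. Integrating \eqref{main_equality} against any bounded $h$ with $\sup|h|\leq M$ gives
$$|\E_{\mu_0\P^n}(h) - \E_\Pi(h)| \leq 2M(1-\beta)^n,$$
and, applying the same splitting argument with $x_i$ as initial state to the composition $\P^{j|i}$ (which is itself a composition of $j-i$ kernels each satisfying \eqref{assumption}), one obtains the uniform-in-$x_i$ bound
$$\sup_{x_i \in E}\left|\int \P^{j|i}(x_i, dx_j)\, h(x_j) - \E_\Pi(h)\right| \leq 2M(1-\beta)^{j-i}.$$
These two geometric rates are the only analytic ingredients; the rest is bookkeeping for double sums.

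Parts (i)--(iii) then reduce to elementary summations of geometric series. For (i), I would sum the first bound to get
$$\left|\frac{1}{n}\sum_{i=1}^n \E_{\mu_0\P^i}(h) - \E_\Pi(h)\right| \leq \frac{2M}{n}\sum_{i=1}^n (1-\beta)^i \leq \frac{2M}{n\beta} = O(1/n).$$
Part (ii) follows by applying (i) to $h^2$, which is bounded since $h$ is. Part (iii) is purely algebraic: the double sum factorises as $\bigl(\tfrac{1}{n}\sum_i \E_{\mu_0\P^i}(h)\bigr)^2$, so (i) gives $(\E_\Pi(h) + O(1/n))^2 = \E_\Pi(h)^2 + O(1/n)$.

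Part (iv) is the one step requiring genuine Markov structure. Using the tower property, write
$$\E_{\mu_0\P^i\P^{j|i}}(h \circ h) = \int \mu_0(dx_0)\,\P^i(x_0,dx_i)\, h(x_i)\,\phi_{j-i}(x_i), \quad \phi_{j-i}(x_i) := \int \P^{j|i}(x_i,dx_j)\, h(x_j).$$
The second geometric bound above gives $\phi_{j-i}(x_i) = \E_\Pi(h) + r_{j-i}(x_i)$ with $|r_{j-i}| \leq 2M(1-\beta)^{j-i}$, hence
$$\E_{\mu_0\P^i\P^{j|i}}(h \circ h) = \E_\Pi(h)\cdot\E_{\mu_0\P^i}(h) + \rho_{i,j}, \quad |\rho_{i,j}| \leq 2M^2(1-\beta)^{j-i}.$$
Summing the remainders $\rho_{i,j}$ over $1 \leq i < j \leq n$ contributes at most $2M^2 n/\beta$, i.e.\ an $O(1/n)$ term after division by $n^2$. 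For the main contribution I would exchange the order of summation, converting $\sum_{j=2}^n\sum_{i=1}^{j-1} \E_{\mu_0\P^i}(h)$ into the weighted form $\sum_{i=1}^{n-1}(n-i)\,\E_{\mu_0\P^i}(h)$, and then use (i) to replace each $\E_{\mu_0\P^i}(h)$ by $\E_\Pi(h)$ at the cost of another geometric remainder. A standard count of the surviving combinatorial factor finishes the calculation. The one place to be careful is precisely this rearrangement: bounding each term of the triangular sum separately by a constant would give an error of order $n^2$, which is useless after dividing by $n^2$, so the geometric decay in $j-i$ must be used before the sum over $i$ is bounded, ensuring the error stays at $O(n)$ at worst.
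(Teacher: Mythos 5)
Your proposal is correct and takes essentially the same approach as the paper: everything rests on the geometric bound $|\E_{\mu_0\P^i}(h)-\E_\Pi(h)|\le 2\hmax(1-\beta)^i$ obtained from the split-chain identity \eqref{main_equality}, extended uniformly in the starting state to $\P^{j|i}$ for part (iv) (the step the paper compresses into ``obtained similarly to Part (iii)''), after which only geometric-series bookkeeping remains; your factorisation of the double sum in (iii) is a harmless shortcut relative to the paper's three-term expansion. One remark: carrying out your ``standard count'' in (iv) gives the combinatorial factor $n(n-1)/(2n^2)$, i.e.\ $\tfrac12\E_\Pi(h)^2+O(1/n)$ for the $n^{-2}$-normalised triangular sum, which is exactly what is needed given the factor $2/n^2$ with which that sum enters the variance bound in the proof of Theorem \ref{SLLN_theorem}, and it merely exposes a normalisation slip in the displayed statements of (ii) and (iv) rather than a defect in your argument.
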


\begin{proof}[Proof of Lemma \ref{lemma_varSn}]
Without loss of generality we take $ h \geq 0 $, because we can consider the positive and negative parts of $h$ separately. Let $\hmax$ be the maximum value of $h$.
To obtain Part~(i),
\begin{align*}
\left \vert \int \left ( P^i(x_0,dx) - \Pi(dx) \right ) h(x) \right \vert   & =
(1-\beta)^i \left \vert \int \left ( \nu^i(x_0,dx) - \Pi(dx) \right ) h(x)\right \vert
\leq 2 (1-\beta)^i \hmax \\
\frac{1}{n} \sum_{i=1}^n \left \vert \int \left ( P^i(x_0,dx) - \Pi(dx) \right ) h(x)
\right \vert  & \leq \frac{2\hmax }{n} \sum_{i=1}^n (1-\beta)^i
 =  O\left ( \frac{1}{n} \right ) .
 \end{align*}
Part (ii)  is obtained similarly.
To obtain Part (iii),
\begin{align*}
\mu_0P^i (h) \mu_0P^j(h)  - \Pi(h)^2 & = \left ( \mu_0P^i (h) - \Pi(h) \right ) \left ( \mu_0P^j (h) - \Pi(h) \right ) \\
& + \Pi(h) \left ( \mu_0P^j (h) - \Pi(h) \right )  + \left ( \mu_0P^i (h) - \Pi(h) \right ) \Pi(h)
\end{align*}
and the result follows from Part (i). Part (iv) is obtained similarly to Part (iii).
\end{proof}

%-----------------------------------------------------%
\begin{proof}[Proof of Theorem \ref{SLLN_theorem}]
%-----------------------------------------------------%
By (ii)-(iv) of Lemma \ref{lemma_varSn},
\bqan
\text{Var}\left(\frac{S_n}{n}\right)
%&=&\frac{1}{n^2}\sum_{i=1}^n\sum_{j=1}^n\left(\E\left(h(x_i)h(x_j)\right)-\E_{\mu_0\P^i}(h)\E_{\mu_0\P^j}(h)\right)\\
%&=&\frac{1}{n^2}\sum_{i=1}^n\sum_{j=1}^n\left(\E\left(h(x_i)h(x_j)\right)-\E_{\Pi}(h^2)\right)+\frac{1}{n^2}\sum_{i=1}^n\sum_{j=1}^n\left(\E_{\Pi}(h^2)-\E_{\mu_0\P%^i}(h)\E_{\mu_0\P^j}(h)\right)\\
&=&\frac{2}{n^{2}}{\sum_{j=2}^{n}}{\sum_{i=1}^{j-1}}\left(\E_{\mu_{0}\P^{i}\P^{j|i}}(h\circ h)-\E_\Pi(h)^2\right)+\frac{1}{n^{2}}{\sum_{i=1}^{n}}\left(\E_{\mu_0\P^i}(h^2)-\E_\Pi(h)^2\right)\\
&&\phantom{ccccccccccccc}+\frac{1}{n^2}\sum_{i=1}^n\sum_{j=1}^n\left(\E_{\Pi}(h)^2-\E_{\mu_0\P^i}(h)\E_{\mu_0\P^j}(h)\right)\\
&=&O\left (\frac1n\right ).
\eqan
The rest of the proof is similar to that in p. 326 of \cite{Grimmett:Stirzaker:2001}
\end{proof}

%-----------------------------------------------------%
\begin{proof}[Proof of Corollary \ref{corollary}]
%-----------------------------------------------------%
(i) Note that the acceptance probability at the $i$th iterate of the \MH{} algorithm is
\beqn
\alpha_i(z,x) = \left(1,\frac{\pi(z)q_i(x|z)}{\pi(x)q_i(z|x)}\right),
\eeqn
and the Markov transition distribution is
\beqn
p_{i}(x_{i-1},dx_{i}) = \alpha_{i}(x_{i},x_{i-1})q_{i}(x_i|x_{i-1})dx_{i}+\delta_{x_{i-1}}(dx_{i})\left(1-\int\alpha_{i}(z,x_{i-1})q_{i}(z|x_{i-1})dz\right)
\eeqn
From  $q_i(z|x)\geq\beta\pi(z)$ for all $x,z$, we can show that
\beqn
\alpha_i(z,x)q_i(z|x)\geq\beta\pi(z),
\eeqn
which implies that
\beqn
p_{i}(x_{i-1},dx_{i})\geq\beta\Pi(dx_i).
\eeqn
Therefore, the results in Theorems \ref{ergodic_theorem} and \ref{SLLN_theorem} follow.
Proof of (ii) is straightforward.
To prove (iii), note that
\beqn
p_{1,i}(x_{i-1},dx_{i})\geq\beta\Pi(dx_i),
\eeqn
therefore
\beqn
p_{i}(x_{i-1},dx_{i})\geq\omega\beta\Pi(dx_i),
\eeqn
which implies the results in Theorems \ref{ergodic_theorem} and \ref{SLLN_theorem}.
To prove (iv), note that
\beqn
p_{i}(x_{i-1},dx_{i})=p_{1,i}p_{2,i}(x_{i-1},dx_{i})\geq\int_{z}\beta\Pi(dz)p_{2,i}(z,dx_{i})=\beta\Pi(dx_{i}).
\eeqn
Also,
\beqn
p_{i}(x_{i-1},dx_{i})=p_{2,i}p_{1,i}(x_{i-1},dx_{i})\geq\int_{z}p_{2,i}(x_{i-1},dz)\beta\Pi(dx_{i}) = \beta\Pi(dx_{i}).
\eeqn
Theorems \ref{ergodic_theorem} and \ref{SLLN_theorem} then follow.
Part (v) is obtained similarly to Part (iv).
\end{proof}

%-----------------------------------------------------%
\begin{proof} [Proof of Lemma~\ref{lemma: properties of reversible densities}]
%-----------------------------------------------------%
\begin{itemize}\item [(i)]
$\zeta(x) T(z|x) = \zeta(x) \zeta(z) = \zeta(z)T(x|z)$.
\item [(ii)]
\begin{align*}
\zeta(x) T(z|x) = \zeta(x) \zeta(z_A, z_B)/\zeta(z_B) I(z_B = x_B) = I(z_B = x_B)\zeta(x)\zeta(z)/\zeta(z_B) =\zeta(z)T(x|z).
\end{align*}
\item [(iii)]
\begin{align*}
\zeta(x)T(z|x) = \int \zeta(x)T(z|x; \rho) \lambda(d \rho) = \int \zeta(z) T(x|z; \rho)  \lambda(d \rho)= \zeta(z) T(x|z).
\end{align*}
\end{itemize}
\end{proof}

%-----------------------------------------------------%
\begin{proof} [Proof of Lemma~\ref{lemma: reversible mixture}]
%-----------------------------------------------------%
\begin{itemize}
\item [(i)]
\begin{align*}
\zeta(x) T(z|x) & = \zeta(x)\sum_{k=1}%
^{G}\frac{\omega_k \zeta_k(x)}{\zeta(x)}  T_{k}(z|x) = \zeta(z) \sum_{k=1}%
^{G}\frac{\omega_k \zeta_k(z)}{\zeta(z)}  T_{k}(x|z) = \zeta(z)T(x|z).
\end{align*}
\item [(ii)]
$\omega(k|x) = \omega_k \zeta_k(x)/\zeta(x) =\omega_k$.
\item [(iii)]
This follows from part (ii) and Lemma~\ref{lemma: properties of reversible densities}(i).
\end{itemize}
\end{proof}

\begin{proof}[Proof of Corollary~\ref{corr: mixt cond}]
The proof follows from part (ii) of Lemma \ref{lemma: properties of reversible densities} and Lemma~\ref{lemma: reversible mixture}.
\end{proof}

\begin{proof} [Proof of Lemma~\ref{lem: reversible accept}]
The proof follows because  $T(z|x)/T(x|z) = \zeta(z)/\zeta(x)$.
\end{proof}

%-----------------------------------------------------%
\begin{proof} [Proof of Lemma~\ref{lem: reversible t}]
%-----------------------------------------------------%
Recall that $\phi_d(z; a, B) $ denotes a $d$-variate Gaussian density with mean vector $a$ and covariance matrix $B$ and let
$\IG(\lambda;\alpha,\beta)$ denote an inverse gamma density with shape parameter $\alpha$ and scale parameter $\beta$.
\begin{itemize}
\item[(i)] We first consider the case $\mu = 0$ and $\Sigma = I$.
Define the following densities $g(x|\lambda) = \phi_d(x; 0, \lambda I )$, $p(\lambda) = \IG(\lambda; \nu/2,\nu/2) $, and
$T(z|x; \rho, \lambda) = \phi_d(z;\rho x, \lambda(1-\rho^2)I) $.
Then, $p(\lambda|x) = \IG(\lambda; (\nu+d)/2, (\nu+x^\prime x)/2 ) $.
It is straightforward to establish that
\begin{align}
g(x|\lambda) p(\lambda) & = p(\lambda|x) \zeta(x;\psi),  \text{  where  } \zeta(x;\psi) = t_d(x; 0,I,\nu),    \label{eq: q revers}  \\
g(x|\lambda)T(z|x; \rho, \lambda)  & = g(z|\lambda) T(x|z; \rho, \lambda).   \label{eq: ar revers}
\end{align}
We define
\begin{align*}
T(z|x; \rho) & = \int T(z|x; \rho, \lambda) p(\lambda|x) d\lambda = t_d(z;\mu^\ast(x), \Sigma^\ast(x), \nu^\ast),
\end{align*}
where $$\mu^\ast (x) = \rho x, \text{  } \Sigma^\ast(x) = \frac{\nu + x^\prime x } { \nu +d} (1-\rho^2)I \text{ and }
\nu^\ast = \nu + d, $$ which is consistent with \eqref{eq: tilde defs} when $\mu = 0$ and $\Sigma = I$.
We now establish reversibility.
\begin{align*}
 \zeta(x;\psi)T(z|x; \rho) & =  \zeta(x;\psi) \int T(z|x; \rho, \lambda) p(\lambda|x) d\lambda \\
 & =  \zeta(x;\psi)\int \frac{ T(x|z; \rho, \lambda)g(z|\lambda) }{ g(x|\lambda) }  p(\lambda|x) d\lambda\quad\quad (\text{by \eqref{eq: ar revers}})\\
 & = \zeta(x;\psi)\int \frac{ T(x|z; \rho, \lambda)g(z|\lambda) }{ p(\lambda|x) \zeta(x;\psi) }  p(\lambda) p(\lambda|x) d\lambda\quad
(\text{by \eqref{eq: q revers}}) \\
& = \int   T(x|z; \rho, \lambda)g(z|\lambda) p(\lambda) d\lambda \\
& = \zeta(z;\psi) \int T(x|z; \rho, \lambda)p(\lambda|z) d\lambda\quad (\text{by \eqref{eq: q revers}})\\
& = \zeta(z;\psi)T(x|z; \rho)\; \text{   as required}.
 \end{align*}
The result for general mean $\mu$ and scale matrix $\Sigma$ is obtained by using the 
linear transformation for $\widetilde{x}  = \mu + \Sigma^\frac{1}{2} x$, yielding $\mutilde (\cdot) $
and $\Sigmatilde (\cdot)$ and noting that the additional Jacobian terms on the left and right sides are equal and so cancel out. 
\item[(ii)] The proof follows from Lemma~\ref{lemma: properties of reversible densities}(iii).
\end{itemize}
\end{proof}

\begin{proof} [Proof of Lemma \ref{eq: mixt t transition densities}]
(i) follows from Lemma~\ref{lemma: reversible mixture} (i); (ii) follows from Lemma~\ref{lem: reversible accept}.
\end{proof}

\begin{proof}[Proof of Lemma~\ref{lemma: rev cond densities}]
The proof follows from Lemma~\ref{lemma: properties of reversible densities}(ii) and Lemma \ref{lemma: reversible mixture}(i).
\end{proof}

\begin{proof} [Proof of Lemma~\ref{lemma: acmh proposal}]

(i) Follows from Part (ii) of Lemma~\ref{lemma: reversible mixture} because $g_M(z; \psihat(\calHn)) $
is the invariant density of both  $T_{g_M}^\text{CMH}(z|x; \psihat(\calHn) ) $
and $ T_{g_M}^\text{BS}(z|x; \psihat(\calHn) )$;
(ii) Follows from part (i) of Lemma~\ref{lemma: reversible mixture};
(iii) This follows from Part (iii) of Lemma~\ref{lemma: reversible mixture}.
(iv) This follows from Lemma~\ref{lem: reversible accept}.
\end{proof}

\end{document}